\theoremstyle{plain}% default
\newtheorem{thm}{Theorem}
\newtheorem{lem}[thm]{Lemma}
\newtheorem{cor}[thm]{Corollary}
\theoremstyle{definition}
\theoremstyle{remark}
\newtheorem{case}{Case}
\renewcommand{\geq}{\geqslant}
\renewcommand{\ge}{\geqslant}
\renewcommand{\leq}{\leqslant}
\renewcommand{\le}{\leqslant}
\newcommand{\M}{\mathrm{M}_{3}}
\newcommand{\MM}{\mathrm{M}}
\newcommand{\one}{$\mathrm{Anh}$}%Anh (girl's name: intellectual brightness)
\newcommand{\two}{$\mathrm{Bao}$}%Bao (boy's name: protection)
\colorlet{grayred}{yellow!30!brown}
\colorlet{graygreen}{yellow!70!brown}
\colorlet{darkgreen}{black!45!green}
\title{Two-player Tower of Hanoi}
\author[J.~Chappelon]{Jonathan Chappelon}
\address{Jonathan Chappelon, Institut de Math\'{e}matiques et de Modélisation de Montpellier, Universit\'{e} de Montpellier, Campus Triolet - CC051, place Eug\`{e}ne Bataillon, 34095 Montpellier Cedex 05, France}
\email{jonathan.chappelon@um2.fr}
\author[U.~Larsson]{Urban Larsson}
\address{Corresponding author: Urban Larsson, Department of Mathematics and Statistics, Dalhousie University, 6316 Coburg Road, PO Box 15000, Halifax, Nova Scotia, Canada B3H 4R2, supported by the Killam Trust}
\email{urban031@gmail.com}
\author[A.~Matsuura]{Akihiro Matsuura}
\address{Akihiro Matsuura, Division of Information System Design, School of Sience and Engineering, Tokyo Denki University, Ishizaka, Hatoyama-cho, Hiki, Saitama, 350-0394, Japan}
\email{matsu@rd.dendai.ac.jp}
\date{August 10, 2017}
\begin{document}
%%%%%%%%%%%%%%%%%%%%%%%%%%%%%%%%%%%%%%%%%%%%%%%%%%%%%%%%%%%%%%%%%%%%%%%%%%%%%%%%
\begin{abstract}
The Tower of Hanoi game is a classical puzzle in recreational mathematics (Lucas 1883) which also has a strong record in pure mathematics. In a borderland between these two areas we find the characterization of the minimal number of moves, which is $2^n-1$, to transfer a tower of $n$ disks. But there are also other variations to the game, involving for example real number weights on the moves of the disks. This gives rise to a similar type of problem, but where the final score seeks to be optimized. We study extensions of the one-player setting to two players, invoking classical winning conditions in combinatorial game theory such as the player who moves last wins, or the highest score wins. Here we solve both these winning conditions on three heaps.
\end{abstract}
%%%%%%%%%%%%%%%%%%%%%%%%%%%%%%%%%%%%%%%%%%%%%%%%%%%%%%%%%%%%%%%%%%%%%%%%%%%%%%%%
\maketitle
%%%%%%%%%%%%%%%%%%%%%%%%%%%%%%%%%%%%%%%%%%%%%%%%%%%%%%%%%%%%%%%%%%%%%%%%%%%%%%%%
%%%%%%%%%%%%%%%%%%%%%%%%%%%%%%%%%%%%%%%%%%%%%%%%%%%%%%%%%%%%%%%%%%%%%%%%%%%%%%%%
\section{Introduction/overview}
%%%%%%%%%%%%%%%%%%%%%%%%%%%%%%%%%%%%%%%%%%%%%%%%%%%%%%%%%%%%%%%%%%%%%%%%%%%%%%%%
%%%%%%%%%%%%%%%%%%%%%%%%%%%%%%%%%%%%%%%%%%%%%%%%%%%%%%%%%%%%%%%%%%%%%%%%%%%%%%%%
Tower of Hanoi (TH) is a classical puzzle invented by E. Lucas in 1883 (see e.g. \cite{Hi2, Sing} for mathematical, recreational, and historical surveys on this popular game); it is traditionally a `one player game' \cite{DH, DH09}. Here we let two players, \one\ (first player) and \two\ (second player), alternate turns and play a game on three or more pegs with various numbers of disks. We will begin by analyzing games under the following \emph{impartial} rules (the move options do not depend on who is to move) \cite{ANW, BCG, C}. 

Let $n\ge 1$ and $k\ge 3$ be positive integers. Two players alternate in transferring precisely one out of $n$ disks (of different sizes) on $k$ pegs, Peg $1,\ldots ,$ Peg $k$. The \emph{starting position} is as usual for the Tower of Hanoi; \emph{the tower} (i.e., all the disks in decreasing size) is placed on the \emph{starting peg} (Peg 1), and at each stage of the game, a larger disk cannot be placed on top of a smaller. In addition, the current player cannot move the disk that the previous player just moved (in Section~\ref{sec:disc} we discuss some variations to this rule). The game ends when the tower has been transferred to some predetermined \emph{final peg}, and some predetermined condition determines `who wins'. It is not allowed to transfer the tower to a non-final peg\footnote{Such positions could instead be declared drawn and we discuss some variation in Section~\ref{sec:disc}.}. We detail five ending conditions in Section~\ref{S:2}. A \emph{position} is any configuration of the $n$ disks on the $k$ pegs such that no larger disk is on top of a smaller disk, and in case of a non-starting position, a note on which disk was just moved.

Thus, a position $(S_1,\ldots ,S_k ; \alpha)$ is an ordered partitioning of the set $\{1, \ldots , n\}$ in $k$ sub-sets $$\bigcup_{i\in \{1,\ldots ,k\}} \!\!\!\! S_i = \{1, \ldots , n\}$$ such that $S_i\cap S_j = \emptyset$ for any $i\ne j$, and an integer $\alpha \in \{1, \ldots , k\}$. 

Let us exemplify our game with a position on seven disks and three pegs, and where the disk at the dashed peg has just been moved by the previous player and hence cannot be moved by the current player; below it we find its two legal options.\\

\begin{center}
\begin{tikzpicture}[scale=0.3]
\draw (0,0) -- (25,0);
\draw (4.5,0) -- (4.5,4);
\draw[dashed] (12.5,0) -- (12.5,4);
\draw (20.5,0) -- (20.5,4);
\draw[fill=graygreen] (1,0) rectangle (8,0.5);
\draw[fill=graygreen] (1.5,0.5) rectangle (7.5,1);
\draw[fill=graygreen] (2,1) rectangle (7,1.5);
\draw[fill=grayred] (10.5,0) rectangle (14.5,0.5);
\draw[fill=graygreen] (19,0) rectangle (22,0.5);
\draw[fill=graygreen] (19.5,0.5) rectangle (21.5,1);
\draw[fill=graygreen] (20,1) rectangle (21,1.5);
\end{tikzpicture}
 \end{center}

Move options:

\begin{center}
\begin{tikzpicture}[scale=0.3]
\draw (0,0) -- (25,0);
\draw (4.5,0) -- (4.5,4);
\draw[dashed] (12.5,0) -- (12.5,4);
\draw (20.5,0) -- (20.5,4);
\draw[fill=graygreen] (1,0) rectangle (8,0.5);
\draw[fill=graygreen] (1.5,0.5) rectangle (7.5,1);
\draw[fill=graygreen] (2,1) rectangle (7,1.5);
\draw[fill=graygreen] (10.5,0) rectangle (14.5,0.5);
\draw[fill=grayred] (12,.5) rectangle (13,1);
\draw[fill=graygreen] (19,0) rectangle (22,0.5);
\draw[fill=graygreen] (19.5,0.5) rectangle (21.5,1);
\end{tikzpicture}
\end{center}

\begin{center}
\begin{tikzpicture}[scale=0.3]
\draw (0,0) -- (25,0);
\draw[dashed] (4.5,0) -- (4.5,4);
\draw (12.5,0) -- (12.5,4);
\draw (20.5,0) -- (20.5,4);
\draw[fill=graygreen] (1,0) rectangle (8,0.5);
\draw[fill=graygreen] (1.5,0.5) rectangle (7.5,1);
\draw[fill=graygreen] (2,1) rectangle (7,1.5);
\draw[fill=grayred] (4,1.5) rectangle (5,2);
\draw[fill=graygreen] (10.5,0) rectangle (14.5,0.5);
\draw[fill=graygreen] (19,0) rectangle (22,0.5);
\draw[fill=graygreen] (19.5,0.5) rectangle (21.5,1);
\end{tikzpicture}
\end{center}
\vspace{.5 cm}
Henceforth it will be clear by the context which disk was moved last, so we omit the dashed emphasis. Note that, by this rule, there are exactly $3(2\cdot 3^{n-1}-1)=2(3^{n} - 3) + 3$ positions of two-player TH on $3$ pegs and $n$ disks; 2 positions for each configuration of the non-Tower configurations (depending on which disk was just moved) and the $3$ Tower configurations. A game is \emph{cyclic} (or loopy \cite{C2}) if there is at least one position which can be revisited during (not neccesarily optimal) play. Clearly TH is cyclic. 
%In a two-player game, a given winning condition usually provides the incitement to play. 
\subsection{Normal play}
One winning condition is adapted from the one-player game: the player who plays the last disk (on top of the rest of the tower) wins. This corresponds to a classical convention for two-player games: a player who cannot move loses, which is called \emph{normal play} \cite{ANW, BCG, C, Si}. If no player can force a win in this setting, then the game is declared drawn.  %\footnote{Or for that matter, a game is declared drawn, in general, if both players chose non-victorious paths independent of optimality, but such plays will not be considered here, and is usually not considered in combinatorial game theory. In a disjunctive sum of games, however, a `non-optimal' path in a distinct component could lead to a victory in the full game.}.

One of our first observations, in Section~\ref{S:2}, is that in spite of the cyclic nature of the game, if we play on just three pegs, \one\ wins the normal play convention. \two's moves will be forced throughout the game and the proof is an adaptation of the well known one-player result. In fact, \one's moves also have a restriction; she always has to move the smallest disk, but this turns out to be an advantage. In that section, we also note that the game is drawn on four or more pegs. 

\subsection{Scoring play}
%We begin by reviewing the background to scoring play Tower of Hanoi. 
Scoring two-player TH is the following game. The players move as in the normal play variation, but at each stage of game the current player gains a given real weight $w_{ij} = w_{ji}$, for a move from Peg~$i$ to Peg~$j$ (and the game starts with score 0 for each player). Thus, with three pegs there are three \emph{move edges}, $\{i, j\}\in\{\{1,2\}, \{1,3\},\{2,3\}\}$ and three real weights $w_{12},w_{13}$ and $w_{23}$; see Figure~\ref{F:1} for an illustration. For example, if \one's current score is $s$ and she moves along edge $\{i,j\}$, then her updated score is $s+w_{ij}$. The player who obtains the largest score when the game ends wins. We detail our terminology in Section~\ref{S:Scor}. 

If the game terminates but none of the players can claim a victory (because their terminal scores are equal) then the game is a tie. We also adapt the convention of drawn games from normal play, so a game is declared a draw if no player can force a win, by terminating the game\footnote{But, for future reference, we note that  there is also another choice for cyclic scoring games, if one of the players repeatedly moves to more negative scores than the other player, then, even though the game might not terminate, one could define it as a loss for the `more negative' player.}. 

Our main result is that \one\ wins nearly all scoring games on three pegs. The reason is partly the same as in normal play, but in this setting the optimal move sequence varies depending on the given weights, and the proof is non-trivial. The only case when \one\ cannot force a win is when all weights are equal and non-positive; otherwise she can attain an arbitrary high score by adhering to certain intermediate repetitive patterns. Questions of termination are often very hard (e.g. Turing machines), but in our three-peg setting it will be easy to distinguish drawn from winning. The first player, namely \one , controls all the moves under optimal play; if she cannot win, it will be easy to play drawn (see also Section~\ref{S:4}). The question of minimizing the number of moves in the two-player setting is studied in Section~\ref{S:6}.

Some backgorund: scoring play Tower of Hanoi is adapted from one-player Tower of Hanoi \cite{3, 1, 2}.  In \cite{3, 1}, the authors consider the recurrence relations generalized from the one by the Frame-Stewart algorithm\footnote{This algorithm produces an optimal path for four pegs \cite{Bousch} and is conjectured optimal for any $k > 4$.} for the $k$-peg Tower of Hanoi problem, by giving arbitrary positive integers as coefficients of the recurrences and obtained the exact formula for them. In \cite{2}, the authors consider another generalization for the three-peg Tower of Hanoi problem, where each undirected edge between pegs has a positive weight and the problem is to transfer all the disks from one peg to another with the minimum sum of weights, instead of the minimum number of moves, and obtain an optimal algorithm for that problem.

%%%%%%%%%%%%%%%%%%%%%%%%%%%%%%%%%%%%%%%%%%%%%%%%%%%%%%%%%%%%%%%%%%%%%%%%%%%%%%%%
%%%%%%%%%%%%%%%%%%%%%%%%%%%%%%%%%%%%%%%%%%%%%%%%%%%%%%%%%%%%%%%%%%%%%%%%%%%%%%%%
\section{Ending conditions for the two-player Tower of Hanoi}\label{S:2}
%%%%%%%%%%%%%%%%%%%%%%%%%%%%%%%%%%%%%%%%%%%%%%%%%%%%%%%%%%%%%%%%%%%%%%%%%%%%%%%%
%%%%%%%%%%%%%%%%%%%%%%%%%%%%%%%%%%%%%%%%%%%%%%%%%%%%%%%%%%%%%%%%%%%%%%%%%%%%%%%%

Consider the following variations of the two-player Tower of Hanoi; the game ends when the tower has been transferred to a \emph{final peg}, which is 
 
\begin{enumerate}[(EC1)]
\item\label{EC1}
a given peg, distinct from the starting peg;
\item\label{EC5}
the starting peg, but the largest disk has to be moved at least once;
\item\label{EC6}
the starting peg, but the smallest disk has to be moved at least once;
\item\label{EC3}
any peg, but the largest disk has to be moved at least once; and
\item\label{EC4}
any peg, but the smallest disk has to be moved at least once.
\end{enumerate}

Ending conditions (EC2) and (EC3) are not applicable when $n=1$. In Figure~\ref{F:1} to Figure~\ref{F:4}, we illustrate the idea of going from the one-player setting to the two-player setting using the standard graph representation. Each edge will now be directed, and the direction depends on the previous move.  In Figures~\ref{F:2} and ~\ref{F:4} we show which moves are possible in two-player Tower of Hanoi for $n=1$ and $2$ respectively (red dotted edges are illegal; green move edges are directed with direction depending on a given initial move). %In Figure~\ref{F:5} we sketch how the Tower of Hanoi game converges to the Sierpinski gasket.
%%%%%%%%%%%%%%%%%%%%%%%%%%%%%%%%%%%%%%%%%%%%%%%%%%%%%%%%%%%%%%%%%%%%%%%%%%%%%%%%
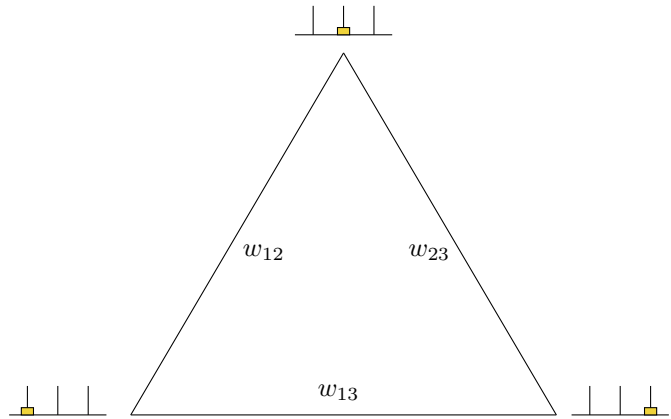
\begin{figure}
\begin{center}
\caption{A graph representation of Tower of Hanoi, with weighted move edges, for $n=1$.}\label{F:1}
\vspace{4mm}
\begin{tikzpicture}[xscale=0.4,yscale=0.48]
\draw (0,0) -- (7,10);
\draw (0,0) -- (14,0);
\draw (7,10) -- (14,0);
\begin{scope}[xshift=-4cm,yshift=-0cm,scale=0.4]
\draw (0,0) -- (8,0);
\draw (1.5,0) -- (1.5,2);
\draw (4,0) -- (4,2);
\draw (6.5,0) -- (6.5,2);
\draw[fill=graygreen] (1,0) rectangle (2,0.5);
\end{scope}
\begin{scope}[xshift=5.4cm,yshift=10.5cm, scale=0.4]
\draw (0,0) -- (8,0);
\draw (1.5,0) -- (1.5,2);
\draw (4,0) -- (4,2);
\draw (6.5,0) -- (6.5,2);
\draw[fill=graygreen] (3.5, 0) rectangle (4.5, 0.5);
\end{scope}
\begin{scope}[xshift=14.5cm,yshift=0cm, scale=0.4]
\draw (0,0) -- (8,0);
\draw (1.5,0) -- (1.5,2);
\draw (4,0) -- (4,2);
\draw (6.5,0) -- (6.5,2);
\draw[fill=graygreen] (6, 0) rectangle (7, 0.5);
\put(42,60){$w_{12}$}
\put(70,7){$w_{13}$}
\put(104,60){$w_{23}$}
\end{scope}
\end{tikzpicture}
\end{center}
\end{figure}
%%%%%%%%%%%%%%%%%%%%%%%%%%%%%%%%%%%%%%%%%%%%%%%%%%%%%%%%%%%%%%%%%%%%%%%%%%%%%%%%
\begin{figure}
\begin{center}
\caption{ A play of two-player TH for $n=1$, (EC1).}\label{F:2}
\vspace{3 mm}
\begin{tikzpicture}[xscale=0.4,yscale=0.48]
\draw [red, thick, dotted] (0,0) -- (7,10);
\draw [darkgreen, thick, ->] (0,0) -- (14,0);
\draw [red, thick, dotted] (7,10) -- (14,0);
\begin{scope}[xshift=-4cm,yshift=-0cm,scale=0.4]
\draw (0,0) -- (8,0);
\draw (1.5,0) -- (1.5,2);
\draw (4,0) -- (4,2);
\draw (6.5,0) -- (6.5,2);
\draw[fill=graygreen] (1,0) rectangle (2,0.5);
\end{scope}
\begin{scope}[xshift=5.4cm,yshift=10.5cm, scale=0.4]
\draw (0,0) -- (8,0);
\draw (1.5,0) -- (1.5,2);
\draw (4,0) -- (4,2);
\draw (6.5,0) -- (6.5,2);
\draw[fill=graygreen] (3.5, 0) rectangle (4.5, 0.5);
\end{scope}
\begin{scope}[xshift=14.5cm,yshift=0cm, scale=0.4]
\draw (0,0) -- (8,0);
\draw (1.5,0) -- (1.5,2);
\draw (4,0) -- (4,2);
\draw (6.5,0) -- (6.5,2);
\draw[fill=graygreen] (6, 0) rectangle (7, 0.5);
\end{scope}
\end{tikzpicture}
\end{center}
\end{figure}
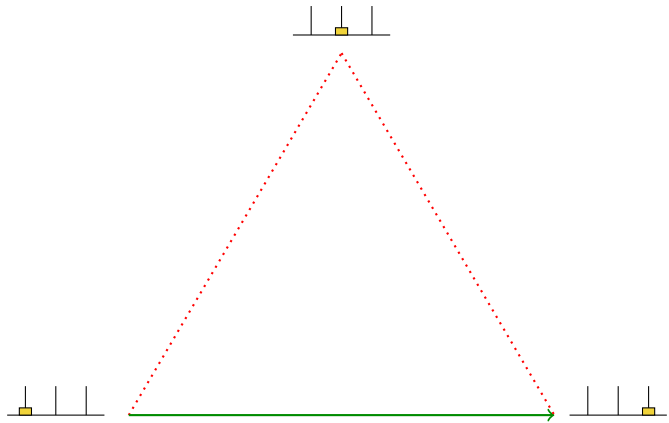
\vspace{3 mm}
%%%%%%%%%%%%%%%%%%%%%%%%%%%%%%%%%%%%%%%%%%%%%%%%%%%%%%%%%%%%%%%%%%%%%%%%%%%%%%%%
\begin{figure}
\begin{center}
\caption{A graph representation of one-player TH for $n=2$.}\label{F:3}
\vspace{3 mm}
\begin{tikzpicture}[xscale=0.4,yscale=0.47]
\draw (0,0) -- (7,10);
\draw (0,0) -- (14,0);
\draw (7,10) -- (14,0);
\draw (4.6, 6.6) -- (9.4, 6.6);
\draw (4.6, 0) -- (2.3, 3.3);
\draw (9.4, 0) -- (11.7, 3.3);
\begin{scope}[xshift=-1.7cm,yshift=3.3cm,scale=0.4]
\draw (0,0) -- (8,0);
\draw (1.5,0) -- (1.5,2);
\draw (4,0) -- (4,2);
\draw (6.5,0) -- (6.5,2);
\draw[fill=graygreen] (0.5,0) rectangle (2.5,0.5);
\draw[fill=graygreen] (6,0) rectangle (7,0.5);
\end{scope}
\begin{scope}[xshift= 0.6cm,yshift=6.6cm,scale=0.4]
\draw (0,0) -- (8,0);
\draw (1.5,0) -- (1.5,2);
\draw (4,0) -- (4,2);
\draw (6.5,0) -- (6.5,2);
\draw[fill=graygreen] (3,0) rectangle (5,0.5);
\draw[fill=graygreen] (6,0) rectangle (7,0.5);
\end{scope}
\begin{scope}[xshift=12.6cm,yshift=3.3cm,scale=0.4]
\draw (0,0) -- (8,0);
\draw (1.5,0) -- (1.5,2);
\draw (4,0) -- (4,2);
\draw (6.5,0) -- (6.5,2);
\draw[fill=graygreen] (1,0) rectangle (2,0.5);
\draw[fill=graygreen] (5.5,0) rectangle (7.5,0.5);
\end{scope}
\begin{scope}[xshift=10.2cm,yshift=6.6cm,scale=0.4]
\draw (0,0) -- (8,0);
\draw (1.5,0) -- (1.5,2);
\draw (4,0) -- (4,2);
\draw (6.5,0) -- (6.5,2);
\draw[fill=graygreen] (3,0) rectangle (5,0.5);
\draw[fill=graygreen] (1,0) rectangle (2,0.5);
\end{scope}
\begin{scope}[xshift=-4cm,yshift=-0cm,scale=0.4]
\draw (0,0) -- (8,0);
\draw (1.5,0) -- (1.5,2);
\draw (4,0) -- (4,2);
\draw (6.5,0) -- (6.5,2);
\draw[fill=graygreen] (0.5,0) rectangle (2.5,0.5);
\draw[fill=graygreen] (1,0.5) rectangle (2,1);
\end{scope}
\begin{scope}[xshift=3cm,yshift=-1.5cm,scale=0.4]
\draw (0,0) -- (8,0);
\draw (1.5,0) -- (1.5,2);
\draw (4,0) -- (4,2);
\draw (6.5,0) -- (6.5,2);
\draw[fill=graygreen] (0.5,0) rectangle (2.5,0.5);
\draw[fill=graygreen] (3.5,0) rectangle (4.5,0.5);
\end{scope}
\begin{scope}[xshift=7.8cm,yshift=-1.5cm,scale=0.4]
\draw (0,0) -- (8,0);
\draw (1.5,0) -- (1.5,2);
\draw (4,0) -- (4,2);
\draw (6.5,0) -- (6.5,2);
\draw[fill=graygreen] (3.5,0) rectangle (4.5,0.5);
\draw[fill=graygreen] (5.5,0) rectangle (7.5,0.5);
\end{scope}
\begin{scope}[xshift=5.4cm,yshift=10.5cm, scale=0.4]
\draw (0,0) -- (8,0);
\draw (1.5,0) -- (1.5,2);
\draw (4,0) -- (4,2);
\draw (6.5,0) -- (6.5,2);
\draw[fill=graygreen] (3.5,0.5) rectangle (4.5,1);
\draw[fill=graygreen] (3,0) rectangle (5,0.5);
\end{scope}
\begin{scope}[xshift=14.5cm,yshift=0cm, scale=0.4]
\draw (0,0) -- (8,0);
\draw (1.5,0) -- (1.5,2);
\draw (4,0) -- (4,2);
\draw (6.5,0) -- (6.5,2);
\draw[fill=graygreen] (5.5,0) rectangle (7.5,0.5);
\draw[fill=graygreen] (6,0.5) rectangle (7,1);
\end{scope}
\end{tikzpicture}
\end{center}
\end{figure}
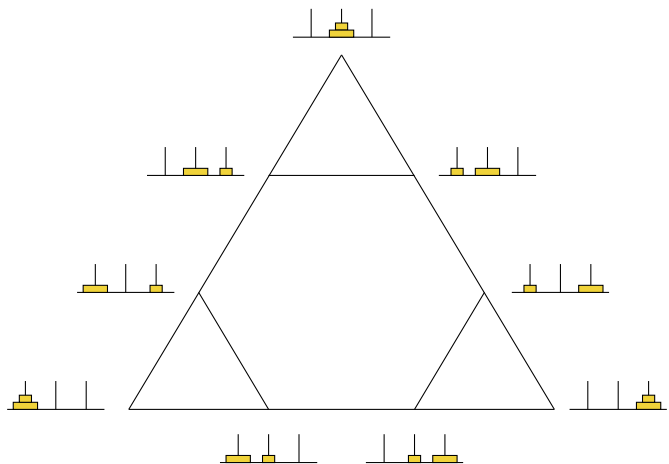
%\vspace{3 mm}
%%%%%%%%%%%%%%%%%%%%%%%%%%%%%%%%%%%%%%%%%%%%%%%%%%%%%%%%%%%%%%%%%%%%%%%%%%%%%%%%
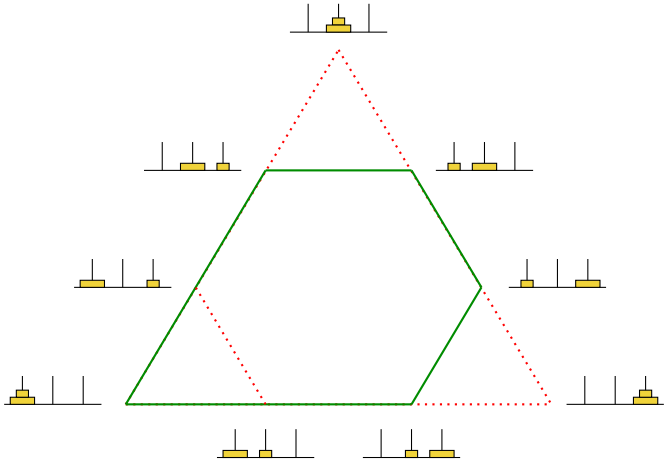
\begin{figure}
\begin{center}
\caption{A play sequence of two-player TH for $n=2$, (EC2); depicting the first row in Table~\ref{table:1}.}\label{F:4} 
\vspace{4 mm}
\begin{tikzpicture}[xscale=0.4,yscale=0.47]
\draw [red, thick, dotted] (0,0) -- (7,10);
\draw [red, thick, dotted] (0,0) -- (14,0);
\draw [red, thick, dotted] (7,10) -- (14,0);
\draw [darkgreen, thick] (0,0) -- (4.6, 6.6);
\draw [darkgreen, thick] (0,0) -- (9.4, 0);
\draw [darkgreen, thick] (11.7, 3.3) -- (9.4, 6.6);
\draw [darkgreen, thick] (4.6, 6.6) -- (9.4, 6.6);
\draw [red, thick, dotted] (4.6, 0) -- (2.3, 3.3);
\draw [darkgreen, thick] (9.4, 0) -- (11.7, 3.3);
\begin{scope}[xshift=-1.7cm,yshift=3.3cm,scale=0.4]
\draw (0,0) -- (8,0);
\draw (1.5,0) -- (1.5,2);
\draw (4,0) -- (4,2);
\draw (6.5,0) -- (6.5,2);
\draw[fill=graygreen] (0.5,0) rectangle (2.5,0.5);
\draw[fill=graygreen] (6,0) rectangle (7,0.5);
\end{scope}
\begin{scope}[xshift= 0.6cm,yshift=6.6cm,scale=0.4]
\draw (0,0) -- (8,0);
\draw (1.5,0) -- (1.5,2);
\draw (4,0) -- (4,2);
\draw (6.5,0) -- (6.5,2);
\draw[fill=graygreen] (3,0) rectangle (5,0.5);
\draw[fill=graygreen] (6,0) rectangle (7,0.5);
\end{scope}
\begin{scope}[xshift=12.6cm,yshift=3.3cm,scale=0.4]
\draw (0,0) -- (8,0);
\draw (1.5,0) -- (1.5,2);
\draw (4,0) -- (4,2);
\draw (6.5,0) -- (6.5,2);
\draw[fill=graygreen] (1,0) rectangle (2,0.5);
\draw[fill=graygreen] (5.5,0) rectangle (7.5,0.5);
\end{scope}
\begin{scope}[xshift=10.2cm,yshift=6.6cm,scale=0.4]
\draw (0,0) -- (8,0);
\draw (1.5,0) -- (1.5,2);
\draw (4,0) -- (4,2);
\draw (6.5,0) -- (6.5,2);
\draw[fill=graygreen] (3,0) rectangle (5,0.5);
\draw[fill=graygreen] (1,0) rectangle (2,0.5);
\end{scope}
\begin{scope}[xshift=-4cm,yshift=-0cm,scale=0.4]
\draw (0,0) -- (8,0);
\draw (1.5,0) -- (1.5,2);
\draw (4,0) -- (4,2);
\draw (6.5,0) -- (6.5,2);
\draw[fill=graygreen] (0.5,0) rectangle (2.5,0.5);
\draw[fill=graygreen] (1,0.5) rectangle (2,1);
\end{scope}
\begin{scope}[xshift=3cm,yshift=-1.5cm,scale=0.4]
\draw (0,0) -- (8,0);
\draw (1.5,0) -- (1.5,2);
\draw (4,0) -- (4,2);
\draw (6.5,0) -- (6.5,2);
\draw[fill=graygreen] (0.5,0) rectangle (2.5,0.5);
\draw[fill=graygreen] (3.5,0) rectangle (4.5,0.5);
\end{scope}
\begin{scope}[xshift=7.8cm,yshift=-1.5cm,scale=0.4]
\draw (0,0) -- (8,0);
\draw (1.5,0) -- (1.5,2);
\draw (4,0) -- (4,2);
\draw (6.5,0) -- (6.5,2);
\draw[fill=graygreen] (3.5,0) rectangle (4.5,0.5);
\draw[fill=graygreen] (5.5,0) rectangle (7.5,0.5);
\end{scope}
\begin{scope}[xshift=5.4cm,yshift=10.5cm, scale=0.4]
\draw (0,0) -- (8,0);
\draw (1.5,0) -- (1.5,2);
\draw (4,0) -- (4,2);
\draw (6.5,0) -- (6.5,2);
\draw[fill=graygreen] (3.5,0.5) rectangle (4.5,1);
\draw[fill=graygreen] (3,0) rectangle (5,0.5);
\end{scope}
\begin{scope}[xshift=14.5cm,yshift=0cm, scale=0.4]
\draw (0,0) -- (8,0);
\draw (1.5,0) -- (1.5,2);
\draw (4,0) -- (4,2);
\draw (6.5,0) -- (6.5,2);
\draw[fill=graygreen] (5.5,0) rectangle (7.5,0.5);
\draw[fill=graygreen] (6,0.5) rectangle (7,1);
\end{scope}
\end{tikzpicture}
\end{center}
\end{figure}
\vspace{10 mm}
%%%%%%%%%%%%%%%%%%%%%%%%%%%%%%%%%%%%%%%%%%%%%%%%%%%%%%%%%%%%%%%%%%%%%%%%%%%%%%%%
\iffalse
\begin{figure}
\begin{center}
\caption{The case $n=3$. The largest disk is moved along the middle edge as in the case for $n=2$, but as $n$ grows, the middle edge will take on a smaller and smaller proportion of the side of the triangle. Hence the graph representation of Tower of Hanoi will converge to the Sierpinski gasket, the minimal normal play (EC1) path being along the base of the triangle.}\label{F:5}
\vspace{3mm}
\begin{tikzpicture}[xscale=0.43,yscale=0.5]
\draw (0,0) -- (7,10);
\draw (0,0) -- (14,0);
\draw (7,10) -- (14,0);
\begin{scope}[xshift=0cm,yshift=0cm, scale=0.43]
\draw (0,0) -- (7,10);
\draw (0,0) -- (14,0);
\draw (7,10) -- (14,0);
\draw (4.6, 6.6) -- (9.4, 6.6);
\draw (4.6, 0) -- (2.3, 3.3);
\draw (9.4, 0) -- (11.7, 3.3);
\end{scope}
\begin{scope}[xshift=3.99cm,yshift=5.72cm, scale=0.43]
\draw (0,0) -- (7,10);
\draw (0,0) -- (14,0);
\draw (7,10) -- (14,0);
\draw (4.6, 6.6) -- (9.4, 6.6);
\draw (4.6, 0) -- (2.3, 3.3);
\draw (9.4, 0) -- (11.7, 3.3);
\end{scope}
\begin{scope}[xshift=7.99cm,yshift=0cm, scale=0.43]
\draw (0,0) -- (7,10);
\draw (0,0) -- (14,0);
\draw (7,10) -- (14,0);
\draw (4.6, 6.6) -- (9.4, 6.6);
\draw (4.6, 0) -- (2.3, 3.3);
\draw (9.4, 0) -- (11.7, 3.3);
\end{scope}
\end{tikzpicture}
\end{center}
\end{figure}
\fi
%%%%%%%%%%%%%%%%%%%%%%%%%%%%%%%%%%%%%%%%%%%%%%%%%%%%%%%%%%%%%%%%%%%%%%%%%%%%%%%%
%%%%%%%%%%%%%%%%%%%%%%%%%%%%%%%%%%%%%%%%%%%%%%%%%%%%%%%%%%%%%%%%%%%%%%%%%%%%%%%%
\section{General play}\label{S:Gen}
%%%%%%%%%%%%%%%%%%%%%%%%%%%%%%%%%%%%%%%%%%%%%%%%%%%%%%%%%%%%%%%%%%%%%%%%%%%%%%%%
%%%%%%%%%%%%%%%%%%%%%%%%%%%%%%%%%%%%%%%%%%%%%%%%%%%%%%%%%%%%%%%%%%%%%%%%%%%%%%%%

In this section we regard `an odd or even number of moves' in the one-player game in the same sense as for the two-player games; no disk will be moved twice unless some other disk has been moved in between. Each odd numbered move thus moves the smallest disk, and in fact we could equivalently have chosen to let \one\ lead the game. Recall that the starting position throughout the paper is all disks on Peg~1. The following two results are easy well-known consequences of the graph representation of Tower of Hanoi; we complement with proofs by induction.

\begin{thm}\label{thm:1}
For three pegs, $n\ge 2$ disks can be transferred from the starting position to any position using an odd number of moves. That is, in the two-player version, the first player can force play to any position in an odd number of moves. 
\end{thm}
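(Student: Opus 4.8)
The plan is to show that \one\ can drive the entire game by always moving the smallest disk, so that \two's replies are forced, and then to construct an odd-length play by induction on $n$. Label the disks $1,\dots,n$ from smallest to largest. The first ingredient I would isolate is a \emph{control lemma}: whenever \one\ always moves disk $1$, \two\ has a \emph{unique} legal reply at each of her turns (until the game ends). Indeed, \two\ may not touch disk $1$ (it was just moved), so her only available moves are transfers between the two pegs not carrying disk $1$; among those there is exactly one legal move (move the smaller exposed disk onto the other peg, or onto the empty peg), unless all disks sit on a single peg, which is a terminal position. Hence the whole move sequence is determined by \one's successive choices of a destination peg for disk $1$, and a \emph{complete tower} (all disks on one peg) can occur only as the first or last position of a play, never strictly in between.

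For the induction I would take $n=2$ as the base case and verify it by hand: there are only $9$ positions, and a direct inspection of the finite controlled game shows each of them is reachable in an odd number of moves, the three complete-tower positions being reachable only with disk $1$ performing the final, capping move. This base case is genuinely needed, since the excluded case $n=1$ is exactly where the statement fails: after \one's single move \two\ is stuck, so the starting position cannot be regained in a positive odd number of moves.

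For the step ($n\ge 3$) I would fix a target $T$ and split on the peg $q$ carrying the largest disk $n$ in $T$; write $T'$ for the induced configuration of disks $1,\dots,n-1$. If $q$ equals the starting peg $s$, then disk $n$ never has to move, and since it lies at the bottom of peg $s$ it neither blocks nor is blocked by any smaller disk, so the play is literally an $(n-1)$-disk controlled game on top of a parked disk $n$; the induction hypothesis then furnishes a play reaching $T'$ in an odd number of moves. If $q\neq s$, I would use three phases: (A) by the induction hypothesis, gather disks $1,\dots,n-1$ onto the third peg $r\neq s,q$ in an odd number of moves; (B) observe that immediately afterwards \two\ is on move while disk $1$ caps the small tower on $r$, so her unique forced reply is to move disk $n$ from $s$ to the (empty) peg $q$; and (C) apply the induction hypothesis again to rearrange disks $1,\dots,n-1$ from $r$ into $T'$, now with disk $n$ parked on $q$, in an odd number of moves. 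The three lengths then sum to an odd number, plus one, plus an odd number, which is odd, as required.

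The step needing the most care — and the main obstacle — is checking that the control lemma survives the parking of disk $n$ throughout Phases A and C, and that disk $n$ is force-moved \emph{exactly once}, at the Phase~B juncture. The key point is that disk $n$, being largest, can only ever move onto an empty peg; a short case analysis on \two's turn shows that either disk $n$ is buried, or both remaining pegs carry a smaller disk on top, so disk $n$ offers \two\ no extra option and her reply stays unique — \emph{except} precisely when the small tower has just been completed on the peg opposite an isolated disk $n$, where moving disk $n$ becomes her only move. Since in the isolated $(n-1)$-game a completed small tower is terminal (by the control lemma), the small-disk moves in the embedded Phases A and C coincide with an isolated play and never complete the small tower strictly before the phase ends; hence disk $n$ stays put within each phase and is moved only at the intended Phase~B moment. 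This is what simultaneously pins down the control and the parity.
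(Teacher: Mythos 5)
Your proposal is correct and follows essentially the same route as the paper: induction on $n$ with the nine positions for $n=2$ checked by hand as the base case, then a case split on the peg carrying the largest disk in the target, using the odd-plus-one-plus-odd decomposition (gather the $n-1$ smaller disks on the spare peg, move the largest disk once, rearrange). Your additional ``control lemma'' verifying that \two's replies stay forced with the largest disk parked is a welcome elaboration, but it corresponds to reasoning the paper places in the surrounding discussion and in the proof of Corollary~\ref{cor:2} rather than a different method.
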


\begin{proof}
We use induction on $n$ and the fact that the smallest disk is always moved on the odd numbered moves. For $n=2$, Table~\ref{table:1} gives examples of sequences for transferring the two disks from Peg~1 to each of the nine possible positions, using in each case an odd number of moves. Suppose now that the result is true for $n\ge 2$ disks. We want to transfer $n+1$ disks from the starting position, say from Peg~1, to a position $P$. We distinguish different cases depending on the position of the largest disk LD in $P$. If LD is on Peg~1 in $P$, we have just to transfer the $n$ other disks from Peg~1 to the position $P$ and we do it by using an odd number of moves by induction hypothesis. Otherwise, if LD is on, say Peg~2 in $P$, we transfer the $n$ smallest disks from Peg~1 to Peg~3 using $n_1$ moves, the LD from Peg~1 to Peg~2, and finally the $n$ smallest disks from Peg~3 to the position $P$ using $n_2$ moves. Thus, we have transferred the $n+1$ disks from Peg~1 to the position $P$ using $n_1+n_2+1$ moves and, since $n_1$ and $n_2$ are odd by induction hypothesis, this number of moves is odd. If LD is on peg 3, the argument is the same. This completes the proof.
\end{proof}

\begin{table}
\begin{center}
\caption{Any Tower of Hanoi position  on two disks and three pegs can be reached in an odd number of moves.}\label{table:1}
\begin{tabular}{|c|c|c|}
\hline
Position & Sequence of moves & Number of moves \\
\hline
\begin{tikzpicture}[scale=0.2]
\draw (0,0) -- (8,0);
\draw (1.5,0) -- (1.5,2);
\draw (4,0) -- (4,2);
\draw (6.5,0) -- (6.5,2);
\draw[fill=graygreen] (0.5,0) rectangle (2.5,0.5);
\draw[fill=graygreen] (1,0.5) rectangle (2,1);
\end{tikzpicture}
& $13 - 12 - 13 - 23 - 12 - 13 - 12$ &  $7$ \\
\hline
\begin{tikzpicture}[scale=0.2]
\draw (0,0) -- (8,0);
\draw (1.5,0) -- (1.5,2);
\draw (4,0) -- (4,2);
\draw (6.5,0) -- (6.5,2);
\draw[fill=graygreen] (0.5,0) rectangle (2.5,0.5);
\draw[fill=graygreen] (3.5,0) rectangle (4.5,0.5);
\end{tikzpicture}
& $12$ &  $1$ \\
\hline
\begin{tikzpicture}[scale=0.2]
\draw (0,0) -- (8,0);
\draw (1.5,0) -- (1.5,2);
\draw (4,0) -- (4,2);
\draw (6.5,0) -- (6.5,2);
\draw[fill=graygreen] (0.5,0) rectangle (2.5,0.5);
\draw[fill=graygreen] (6,0) rectangle (7,0.5);
\end{tikzpicture}
& $13$ &  $1$ \\
\hline
\begin{tikzpicture}[scale=0.2]
\draw (0,0) -- (8,0);
\draw (1.5,0) -- (1.5,2);
\draw (4,0) -- (4,2);
\draw (6.5,0) -- (6.5,2);
\draw[fill=graygreen] (1,0) rectangle (2,0.5);
\draw[fill=graygreen] (3,0) rectangle (5,0.5);
\end{tikzpicture}
& $13 - 12 - 13$ &  $3$ \\
\hline
\begin{tikzpicture}[scale=0.2]
\draw (0,0) -- (8,0);
\draw (1.5,0) -- (1.5,2);
\draw (4,0) -- (4,2);
\draw (6.5,0) -- (6.5,2);
\draw[fill=graygreen] (3.5,0.5) rectangle (4.5,1);
\draw[fill=graygreen] (3,0) rectangle (5,0.5);
\end{tikzpicture}
& $13 - 12 - 23$ &  $3$ \\
\hline
\begin{tikzpicture}[scale=0.2]
\draw (0,0) -- (8,0);
\draw (1.5,0) -- (1.5,2);
\draw (4,0) -- (4,2);
\draw (6.5,0) -- (6.5,2);
\draw[fill=graygreen] (3,0) rectangle (5,0.5);
\draw[fill=graygreen] (6,0) rectangle (7,0.5);
\end{tikzpicture}
& $12 - 13 - 12 - 23 - 13$ &  $5$ \\
\hline
\begin{tikzpicture}[scale=0.2]
\draw (0,0) -- (8,0);
\draw (1.5,0) -- (1.5,2);
\draw (4,0) -- (4,2);
\draw (6.5,0) -- (6.5,2);
\draw[fill=graygreen] (1,0) rectangle (2,0.5);
\draw[fill=graygreen] (5.5,0) rectangle (7.5,0.5);
\end{tikzpicture}
& $12 - 13 - 12$ &  $3$ \\
\hline
\begin{tikzpicture}[scale=0.2]
\draw (0,0) -- (8,0);
\draw (1.5,0) -- (1.5,2);
\draw (4,0) -- (4,2);
\draw (6.5,0) -- (6.5,2);
\draw[fill=graygreen] (3.5,0) rectangle (4.5,0.5);
\draw[fill=graygreen] (5.5,0) rectangle (7.5,0.5);
\end{tikzpicture}
& $13 - 12 - 13 - 23 - 12$ &  $5$ \\
\hline
\begin{tikzpicture}[scale=0.2]
\draw (0,0) -- (8,0);
\draw (1.5,0) -- (1.5,2);
\draw (4,0) -- (4,2);
\draw (6.5,0) -- (6.5,2);
\draw[fill=graygreen] (5.5,0) rectangle (7.5,0.5);
\draw[fill=graygreen] (6,0.5) rectangle (7,1);
\end{tikzpicture}
& $12 - 13 - 23$ &  $3$ \\
\hline
\end{tabular}
\end{center}
\end{table}

\begin{cor}\label{cor:1}
For three pegs, $n\ge 2$ disks can be transferred from the starting position to any intermediate position using an even number of moves. 
\end{cor}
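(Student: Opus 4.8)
The plan is to reduce the statement to Theorem~\ref{thm:1} by adjoining a single move at the end. Recall the standing convention that the smallest disk is moved on exactly the odd-numbered moves; consequently any move sequence of \emph{even} length must terminate with a move of a \emph{non-smallest} disk, whereas Theorem~\ref{thm:1} produces, for every position, an odd-length sequence terminating with a move of the smallest disk. So to reach an intermediate target position $P$ in an even number of moves, I would look for a non-smallest disk $d$ lying on top of one of the pegs of $P$, say on Peg~$j$, together with a peg $i\ne j$ onto which $d$ could legally be placed (that is, Peg~$i$ is empty in $P$ or carries a larger disk on top). Let $Q$ be the position obtained from $P$ by shifting $d$ from Peg~$j$ to Peg~$i$. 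By Theorem~\ref{thm:1} the position $Q$ can be reached in an odd number of moves, and that sequence ends with a move of the smallest disk; appending the single move of $d$ from Peg~$i$ back to Peg~$j$ is then legal — it does not repeat the previously moved (smallest) disk, and it reverses the legal shift that defined $Q$ — and yields $P$ in an even number of moves.

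The crux is therefore the combinatorial claim that every \emph{intermediate} position $P$, one whose disks occupy at least two pegs, admits such a pair $(d,i)$. I would argue this by a short case analysis on the top disks of the occupied pegs. Since the smallest disk sits on top of whatever peg holds it, any occupied peg other than that one exposes a non-smallest disk on top; as $P$ occupies at least two pegs, at least one such disk $d$ exists. If some peg is empty, it serves as the required source $i$. Otherwise all three pegs are occupied, and among the two exposed non-smallest top disks the \emph{smaller} one, say on Peg~$j$, can legally be stacked on the peg carrying the larger one, which then furnishes $i$. This is exactly the point at which the hypothesis ``intermediate'' is used: a position with all disks piled on a single peg exposes only the smallest disk and admits no legal non-smallest final move, so (for $n\ge 2$) it is reachable in an odd number of moves but not in a positive even number.

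I expect the only real work to be the verification that the appended move is genuinely legal and compatible with the move convention — that the disk $d$ is on top of Peg~$i$ in $Q$, that disk~$1$ is therefore not on Peg~$i$ in $Q$ (so no illegal stacking occurs and the smallest-disk move producing $Q$ does not clash), and that $Q$ is a bona fide Tower of Hanoi position to which Theorem~\ref{thm:1} applies. Each of these is routine once the case analysis above is in place. An alternative, should one prefer a self-contained argument, is to re-run the induction of Theorem~\ref{thm:1} while tracking even parity directly, using the $n=2$ instances implicit in Table~\ref{table:1}; but the reduction above seems shorter and isolates the single extra even move cleanly.
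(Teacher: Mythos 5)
Your proposal is correct and follows essentially the same route as the paper: both reach, by Theorem~\ref{thm:1}, an odd-length sequence to a position differing from $P$ by one placement of a non-smallest exposed disk, and then append that single legal move (your disk $d$ is exactly the paper's QD, the smallest disk not on the smallest disk's peg). The legality checks you flag — $d$ is on top, the target peg accepts it, and the preceding move was of the smallest disk so no repetition occurs — are precisely the points the paper's proof also relies on.
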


\begin{proof}
Let $P$ be an intermediate position and suppose that the smallest disk is on Peg~p. Since $P$ is an intermediate position, we know that one of the two other pegs, Peg~q or Peg~r, contains at least one disk. Let QD be the smallest disk of the disks on Peg~q and Peg~r and suppose WLOG that QD is on Peg~q. Let $P'$ be the position where all the disks are positioned like in $P$, except QD which is on Peg~r, instead of Peg~q in $P$. Since all the disks can be transferred from Peg~1 to the position $P'$ using an odd number moves by Theorem~\ref{thm:1}, we reach the position $P$ by only adding one move of transferring QD from Peg~r to Peg~q. Indeed, the smallest disk was moved in the previous move to achieve the position $P'$. Thus, we have transferred all the disks from Peg~p to the position $P$ using an even number of moves.
\end{proof}

%%%%%%%%%%%%%%%%%%%%%%%%%%%%%%%%%%%%%%%%%%%%%%%%%%%%%%%%%%%%%%%%%%%%%%%%%%%%%%%%
%%%%%%%%%%%%%%%%%%%%%%%%%%%%%%%%%%%%%%%%%%%%%%%%%%%%%%%%%%%%%%%%%%%%%%%%%%%%%%%%
\section{Normal play: two-player Tower of Hanoi}
%%%%%%%%%%%%%%%%%%%%%%%%%%%%%%%%%%%%%%%%%%%%%%%%%%%%%%%%%%%%%%%%%%%%%%%%%%%%%%%%
%%%%%%%%%%%%%%%%%%%%%%%%%%%%%%%%%%%%%%%%%%%%%%%%%%%%%%%%%%%%%%%%%%%%%%%%%%%%%%%%

In the normal play variation of the two-player Tower of Hanoi, \one\  can avoid drawn simply by adhering to the well known \emph{minimal algorithm} for the one-player Tower of Hanoi (\two's moves will be forced all through the game), using precisely $2^n-1$ moves. However, she can also choose freely among all odd-length move paths. 

\begin{cor}\label{cor:2}
For three pegs and $n\ge 1$ disks, the two-player Tower of Hanoi game terminates and the first player wins. This is true for any ending condition and also from any position, provided that the previous player did not move the smallest disk.
\end{cor}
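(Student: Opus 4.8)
The plan is to reduce the two-player game to the one-player reachability results already established, by showing that \one\ can seize total control of the play. \textbf{Control lemma.} Suppose \one\ adopts the strategy of always moving the smallest disk. I claim that after each such move \two's reply is \emph{forced}. Indeed, the smallest disk then sits on top of some peg~$p$, and \two\ may not touch it (it was just moved); among the two remaining pegs there is exactly one legal move, namely transferring the smaller of the two exposed disks onto the other peg (or onto an empty peg, if one of them is empty). Thus \two\ has a unique move as long as the game has not ended, and conversely \one\ can always move the smallest disk, since \two\ never does. Consequently \one\ dictates the entire game: the odd-numbered moves move the smallest disk exactly as in Theorem~\ref{thm:1} and Corollary~\ref{cor:1}, and each even-numbered move is determined. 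This is precisely why every move sequence produced by the recursion of Theorem~\ref{thm:1} is a legal two-player play.

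The second ingredient is a parity observation that makes \one\ the only player who can finish. For $n\ge 2$, completing the tower on any peg means placing the smallest disk on top of the other $n-1$ disks, so the terminal move is a move of the smallest disk. Since \one\ makes exactly the moves of the smallest disk (the odd-numbered ones), \emph{only} \one\ can complete the tower; hence \two\ can neither win nor even reach a final position, and whenever the tower is completed the total number of moves is odd with \one\ making the last move. Termination is then immediate: \one\ steers the play along a finite sequence ending in a winning configuration, which exists by Theorem~\ref{thm:1}. For $n=1$ the single disk is itself the smallest and \one\ wins in one move (the ending conditions requiring the largest disk to move are excluded here). This already settles the conditions asking only that the tower reach a prescribed or an arbitrary peg, those requiring the smallest disk to move (automatic, as \one\ moves it first), and those requiring the largest disk to move while finishing off the starting peg (automatic, since assembling the tower away from Peg~1 forces the largest disk across).

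To treat an arbitrary non-final starting position, note that the stated proviso---the previous player did not move the smallest disk---is exactly what guarantees that, at \one's turn, the smallest disk is free. The Control lemma and the parity observation then apply verbatim, and running the recursion of Theorem~\ref{thm:1} from the given configuration rather than from Peg~1 furnishes an odd-length play to a winning terminal configuration. So \one\ wins from every such position.

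I expect the main obstacle to be the ending condition requiring the tower to return to the \emph{starting} peg after the largest disk has moved. Here \one\ may not route through a completed tower on a non-final peg (the rules forbid assembling the whole tower anywhere but the final peg), and the largest disk's motion is controlled only indirectly, through \two's forced replies. The plan is to exhibit an explicit winning play generalizing the first row of Table~\ref{table:1}: \one\ first drives the $n-1$ smallest disks onto Peg~3, which forces \two\ to carry the largest disk from Peg~1 to Peg~2; \one\ then shuffles the $n-1$ smallest disks so as to force the largest disk back from Peg~2 to Peg~1, and finally rebuilds the tower on Peg~1. Throughout, only $n-1$ disks ever share a peg off Peg~1, so the full tower is never illegally assembled on Peg~2 or Peg~3, the largest disk has genuinely moved, and by the parity observation the last (smallest-disk) move is \one's. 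Verifying, for all $n$, that this shuttling never produces a forbidden complete tower is the one genuinely technical point.
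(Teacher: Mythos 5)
Your proposal is correct and follows essentially the same route as the paper: \one\ moves the smallest disk on every turn, which forces \two's reply uniquely, and the odd-length reachability of Theorem~\ref{thm:1} then lets \one\ steer to a terminal position and make the last move herself. The extra care you take over the return-to-the-starting-peg condition is warranted --- the paper's one-paragraph proof glosses over it --- and the shuttle you sketch is exactly the construction the paper supplies later in Theorem~\ref{thm:5} and Figure~\ref{fig:1}.
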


\begin{proof}
From Theorem~\ref{thm:1}, the number of moves for the one-player Tower of Hanoi can be odd. We use any such sequence in the two-player game. In the one-player game, every odd move transfers the smallest of the $n$ disks, which here means that, in each move,  \one\ will move the smallest of the $n$ disks on top of an empty peg or a larger disk. This forces \two\ to move a larger disk at each stage of the game. Precisely one such move is possible since \one's move of the smallest disk occupies one out of precisely three pegs. Since the number of moves used here is odd, \one\ wins. 
\end{proof}

Games on four pegs are mostly loopy.

\begin{thm}\label{thm:2} 
The two-player Tower of Hanoi on four or more pegs is a draw game if the number of disks is three or more, given any ending condition. 
\end{thm}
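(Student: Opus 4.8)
The plan is to show that the game is a draw by proving that \emph{neither} player can force a win; by the convention stated in the introduction, this is exactly what ``draw'' means. Throughout I use the elementary observation that the move which actually \emph{completes} a transfer must move the smallest disk: to finish a tower on a peg $f$ the last disk placed sits on top, hence is the smallest, disk~$1$. Consequently a player completes the tower, and thereby wins under \emph{any} of the ending conditions (EC1)--(EC5) — these conditions only impose extra requirements on an already completed transfer — \emph{only} by playing disk~$1$ onto a peg that already carries disks $2,\dots,n$ correctly stacked. This single remark will let me treat all ending conditions uniformly.

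First I would dispose of the claim that \two\ cannot force a win. Let \one\ adopt the strategy of moving disk~$1$ on every turn. This is always legal: at the start disk~$1$ is on top of the initial peg, and after any move by \two\ the disk just moved is necessarily larger than disk~$1$ (\two\ was forbidden from moving disk~$1$, which \one\ had just played), so disk~$1$ is free again and can be slid onto any of the $l-1\ge 3$ other pegs. Hence after each of \one's moves disk~$1$ is the most recently moved disk, \two\ can never move disk~$1$, and so \two\ can never make the completing move. Thus \two\ never wins, whatever \two\ does, and in particular \one\ is never left without a move.

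The substantial direction is that \one\ cannot force a win either, and this is where $l\ge 4$ and $n\ge 3$ enter. Now \one\ may herself monopolise disk~$1$, forcing \two\ to move only disks $\ge 2$, so \two\ cannot simply copy the strategy above. Instead I would have \two\ maintain the \emph{invariant}: after each of \two's moves, disks $2,\dots,n$ are not all stacked together on a final peg. This suffices, because whenever it is \one's turn the $(n-1)$-disk subtower is then unassembled, so \one\ cannot complete in one move, and \one\ therefore never wins. (If at \two's turn the subtower happens already to be assembled on a final peg with disk~$1$ free, then \two\ completes it and wins outright; the only remaining possibility is that \one\ just moved disk~$1$, and then \two\ restores the invariant as described below.) So everything reduces to the following combinatorial lemma: if $l\ge 4$, $n\ge 3$, disk~$1$ was just moved, and disks $2,\dots,n$ are not all stacked on a final peg, then \two\ has a legal move of a disk $\ge 2$ after which they are still not all stacked on any final peg.

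Proving this lemma is the step I expect to be the main obstacle, and its key geometric input is the \emph{spare peg} provided by $l\ge 4$. The only moves \two\ must avoid are those completing the subtower on some final peg $f$; such a move is available only from a configuration in which exactly one disk $j\in\{2,\dots,n\}$ lies off $f$ while all of $\{2,\dots,n\}\setminus\{j\}$ sit on $f$. But then at most three pegs are occupied at all — the peg carrying the block on $f$, the peg of disk~$1$, and the peg of disk~$j$ — so among the $l\ge 4$ pegs at least one is empty. Since $n\ge 3$, the block on $f$ is nonempty, and \two\ may instead slide its top disk onto that empty peg; this is legal and plainly leaves $2,\dots,n$ unassembled on every peg, so it is the required safe move. (Here $n\ge 3$ and $l\ge 4$ are exactly what is needed: for $n=2$ the block on $f$ would be empty and no such parking move would exist.) A short check confirms that at most one genuinely assembling move is ever simultaneously available — two distinct final pegs cannot each be one disk short, since that would force $2(n-2)\le n-1$, i.e.\ $n\le 3$, and for $n=3$ the extra possibility is illegal because it would place the larger of disks $2,3$ on the smaller — so avoiding the single dangerous move is precisely this one parking move; in every other configuration no single move can assemble the subtower, and \two\ always has some legal move on $\ge 4$ pegs. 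The two directions together show that no player can force a win, so the game is a draw under every ending condition, completing the proof.
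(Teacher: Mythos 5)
Your proposal is correct in its essentials and follows the same underlying idea as the paper's (extremely terse) proof: with a fourth peg available, the player who does not want the game to end can always avoid assembling the sub-tower of disks $2,\dots,n$ on a final peg --- the paper phrases this as ``\two\ never has to place the second smallest disk on top of the third smallest (analogously for \one).'' What you add is genuinely useful: the reduction of every ending condition to the single observation that the terminating move must be a move of disk~$1$ onto an already assembled sub-tower, the explicit counting argument (at most three occupied pegs in a dangerous configuration, hence a spare peg for parking), and, for the direction ``\two\ cannot force a win,'' the cleaner device of having \one\ monopolise disk~$1$ so that \two\ literally never touches the only disk that can complete the transfer. The paper handles that direction only by the parenthetical ``analogously,'' so your argument is more convincing there.

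Two small points in your write-up need patching. First, your key lemma is stated only for the case where disk~$1$ was just moved; when \one's last move was a larger disk, you should add the one-line remark that \two\ may simply move disk~$1$ (always on top, never just moved, and with at least $l-2\geq 2$ admissible destinations), which cannot assemble the sub-tower. Second, and more substantively, your parenthetical ``if at \two's turn the sub-tower is assembled with disk~$1$ free, then \two\ completes it and wins outright'' is not always legal: under (EC2) or (EC4) with the assembled peg equal to the starting peg and the largest disk not yet moved, the completing move is disallowed by the paper's convention, and \two\ also cannot move disk~$2$ (it was just played by \one). The position is still harmless --- \two\ is forced to move disk~$1$ to a non-final peg, whereupon \one, with disk~$1$ frozen, must herself dismantle the sub-tower --- but this forced two-move exchange is a different mechanism from the ones you describe and should be spelled out for the case analysis to be complete.
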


\begin{proof} 
Suppose that there are three or more disks. Then \two's moves are never forced; he never has to place the second smallest disk on top of the third smallest (analogously for \one). 
\end{proof}

For completeness, let us also give the rest of the $k$-peg observations with $k\ge 4$. For (EC1-3), if there are two disks, \two\ never has to move the largest disk to a final peg and hence the game is drawn. For (EC1), if there is only one disk, then \one\ wins in the first move. For (EC2,3), if there is only one disk, the special rule is invoked and \two\ wins in his first move. For (EC4,5), if there are two disks, \two\ has to move the largest disk to a final peg and hence loses. If there is only one disk, then \one\ wins in the first move.

%%%%%%%%%%%%%%%%%%%%%%%%%%%%%%%%%%%%%%%%%%%%%%%%%%%%%%%%%%%%%%%%%%%%%%%%%%%%%%%%
%%%%%%%%%%%%%%%%%%%%%%%%%%%%%%%%%%%%%%%%%%%%%%%%%%%%%%%%%%%%%%%%%%%%%%%%%%%%%%%%
\section{Scoring play: two-player Tower of Hanoi with weights}\label{S:Scor}
%%%%%%%%%%%%%%%%%%%%%%%%%%%%%%%%%%%%%%%%%%%%%%%%%%%%%%%%%%%%%%%%%%%%%%%%%%%%%%%%
%%%%%%%%%%%%%%%%%%%%%%%%%%%%%%%%%%%%%%%%%%%%%%%%%%%%%%%%%%%%%%%%%%%%%%%%%%%%%%%%

As stated in the introduction, for the scoring variation of the normal play setting, we provide real weights to the \emph{move edges}, in the three-peg case, $w_{12}, w_{13}$ and $w_{23}$ respectively. As usual, the two players alternate in moving, and a player gets $w_{ij}$ points for a move along edge $\{i,j\}$. The player who has most points when the game ends wins. We begin by giving the solution of the game with less than three disks.

We will use $A_{ij}(n)$ and $B_{ij}(n)$, for the total points for \one\ and \two\ respectively, of the two-player Tower of Hanoi game, for transferring $n$ disks from Peg~$i$ to Peg~$j$ by a given algorithm, for example the minimal algorithm, and we let the \emph{total score} be $\Delta_{ij}(n) = A_{ij}(n)-B_{ij}(n)$, or just $\Delta(n)$. Hence, if $\Delta(n)>0$ then \one\ wins, if $\Delta(n)=0$, then the game is tie, and otherwise \two\ wins.

\begin{thm}\label{thm:3}
Consider the two-player Tower of Hanoi game on two disks, three pegs and three weights of real numbers $w_{12}$, $w_{13}$, and $w_{23}$. In case of (EC4) or (EC5), the first player wins if either of the following inequalities holds:
\begin{align}
w_{12}+w_{23}-w_{13} > 0\label{tag1}\\ 
3w_{13} - w_{12} - w_{23} > 0\label{tag2}\\
w_{13}+w_{23}-w_{12} > 0\label{tag3}\\ 
3w_{12} - w_{13} - w_{23} > 0\label{tag4}\\
w_{12}+w_{13}-w_{23} > 0\label{tag5}
\end{align}
In case of (EC2) or (EC3), she wins if (\ref{tag5}) holds. In case of (EC1), she wins if (\ref{tag1}) or (\ref{tag2}) holds. Otherwise the game is a draw. 

If the game is played on only one disk, then (EC2,3) are not applicable. For (EC1) the first player wins if $w_{13}>0$; loses if $w_{13}<0$; and the game is a tie otherwise. The second player wins (EC4,5) if $w_{12}<0$ and $w_{13}<0$. The game is a tie if at least one of these weights is 0 and the others are non-positive. Otherwise the first player wins. 
\end{thm}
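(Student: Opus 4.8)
The plan is to exploit the rigidity already established. By Theorem~\ref{thm:1} and Corollary~\ref{cor:2}, \one\ moves the small disk on every odd move while \two\ is forced to slide the large disk onto the unique empty peg on every even move. Hence a play is nothing but a walk through the position space chosen entirely by \one, and the tower can only be completed by \one\ (by placing the small disk onto the large one); \two\ never terminates the game. So the game value is $\sup\Delta$ over all terminating walks \one\ may select, subject to the relevant ending condition: if the supremum is positive she wins, and if it is not positive she simply refuses to complete the tower and loops forever, which by the paper's convention secures at least a draw. For two disks I will check that she always has such a non-completing continuation (so \two\ can never win), whereas the one-disk case differs precisely because there every legal move completes the tower.

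Next I would set up the finite transition graph on the nine positions $(s,\ell)$ recording the pegs of the small and large disks, and collapse each pair ``\one-move followed by the forced \two-move'' into a single \emph{super-edge} between consecutive positions at which it is \one's turn. A short computation produces six non-terminating super-edges organizing into two disjoint triangular cycles, one on $\{(2,3),(1,2),(3,1)\}$ and one on $\{(3,2),(1,3),(2,1)\}$, together with six terminating super-edges (\one\ completing the tower on Peg~1, 2 or~3) and the two edges leaving the start $(1,1)$ into the two triangles.

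The crux is the observation that \emph{every cycle in this graph is neutral}: the per-edge score differences around either triangle sum to $w_{12}+w_{13}+w_{23}-(w_{12}+w_{13}+w_{23})=0$. Consequently looping never changes $\Delta$, so the supremum above is a maximum over finitely many values, and the set of achievable terminal scores is exactly the five quantities in the statement, sorted by the peg of completion: Peg~3 yields (\ref{tag1}) or (\ref{tag2}), Peg~2 yields (\ref{tag4}) or (\ref{tag3}), and the return to the starting Peg~1 yields (\ref{tag5}) (reachable from either triangle). This neutrality, i.e.\ the absence of any score-increasing cycle, is the step I expect to carry the real content, since it is what reduces an a priori loopy optimization to a finite maximum; it also shows that from every triangle position a non-completing super-edge remains available, which is the promised looping escape for two disks.

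Finally I would read off each ending condition by intersecting the admissible terminal pegs with this classification and maximizing. Conditions (EC4)/(EC5) admit all three pegs (in each case the required disk has moved), so \one\ wins iff one of (\ref{tag1})--(\ref{tag5}) is positive; (EC2)/(EC3) admit only the return to Peg~1, giving (\ref{tag5}); and (EC1) (with final peg~3) admits only (\ref{tag1}) or (\ref{tag2}). If no admissible value is positive, \one\ loops for a draw. The one-disk statements I would settle by direct enumeration of the at most two forced first moves, noting that here \one\ has \emph{no} looping escape: under (EC1) she is forced along the edge of weight $w_{13}$, so she wins, ties or loses according to the sign of $w_{13}$; under (EC4)/(EC5) she takes the better of $w_{12},w_{13}$, so \two\ wins exactly when both are negative and the game ties exactly when the best available weight is $0$.
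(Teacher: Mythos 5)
Your proposal is correct and follows essentially the same route as the paper: \one\ controls every move, \two's replies are forced, the repeating loops are score-neutral, and the five displayed quantities are exactly the achievable terminal scores sorted by the peg of completion, with infinite looping as the fallback draw (and direct enumeration for one disk). Your super-edge graph with two disjoint neutral triangles is a cleaner packaging of the paper's explicit list of two drawing sequences and six terminating families, with the added benefit of making transparent the paper's unargued claim that these exhaust all possible plays.
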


\begin{proof}
Notice that in this game, \one\ will only move the small disk and \two\ will only move the large one. There are only six possibilities. If \one\ is not able to force a win, then she will resort to drawn, using either of the two possibilities:
$$
12-(13-12-23)^{\infty}
$$
or
$$
13-(12-13-23)^{\infty},
$$
where $ij$ denotes the current player's  move between Peg~$i$ to Peg~$j$, and where $(\cdot )^\infty$ denotes an infinite repetition of a given move sequence. But, in case it is to her advantage, she can interrupt either of these two sequences of moves, and use either of the following six sequences: 
$$12-(13-12-23)^{2k}-13-23$$ 
$$13-(12-13-23)^{2k+1}-13$$
$$12-(13-12-23)^{2k+1}-12$$
$$13-(12-13-23)^{2k}-12-23$$
$$12-(13-12-23)^{2k+1}-13-12-13$$
$$13-(12-13-23)^{2k+1}-12-13-12$$
for $k$ a non-negative integer. In the first two cases she will end the game on Peg~3 for (EC1,4,5), corresponding to the cases (\ref{tag1}) and (\ref{tag2}) respectively. For the two middle move sequences, she will end on Peg~2 for (EC4,5), corresponding to the cases (\ref{tag3}) and (\ref{tag4}) respectively. To terminate the game on Peg~1, only valid for (EC2-5), she uses one of the two last move sequences. They are symmetric and both result in the inequality (\ref{tag5}). In either case, the largest disk has been moved.

In each of the above cases we evaluate the value of $A(2)-B(2)$ and then the triangular inequalities appear, and it is clear that the difference of total points is independent of the choice of $k$ in each case. 
\end{proof}

%%%%%%%%%%%%%%%%%%%%%%%%%%%%%%%%%%%%%%%%%%%%%%%%%%%%%%%%%%%%%%%%%%%%%%%%%%%%%%%%
\begin{figure}[h!]
\begin{center}
\caption{Consider (EC1) for $n=2$. Here $w_{23} = -3$ and the other weights represent the $x$- and $y$-axes.  The game is drawn in the white area. Compare this picture with the result for  $n\geqslant 3$ in Theorem~\ref{thm:4}, where the class of drawn games would have been represented by a single white dot at $(-3,-3)$.}\label{fig:4}
\vspace{2mm}
\includegraphics[width=0.4\textwidth]{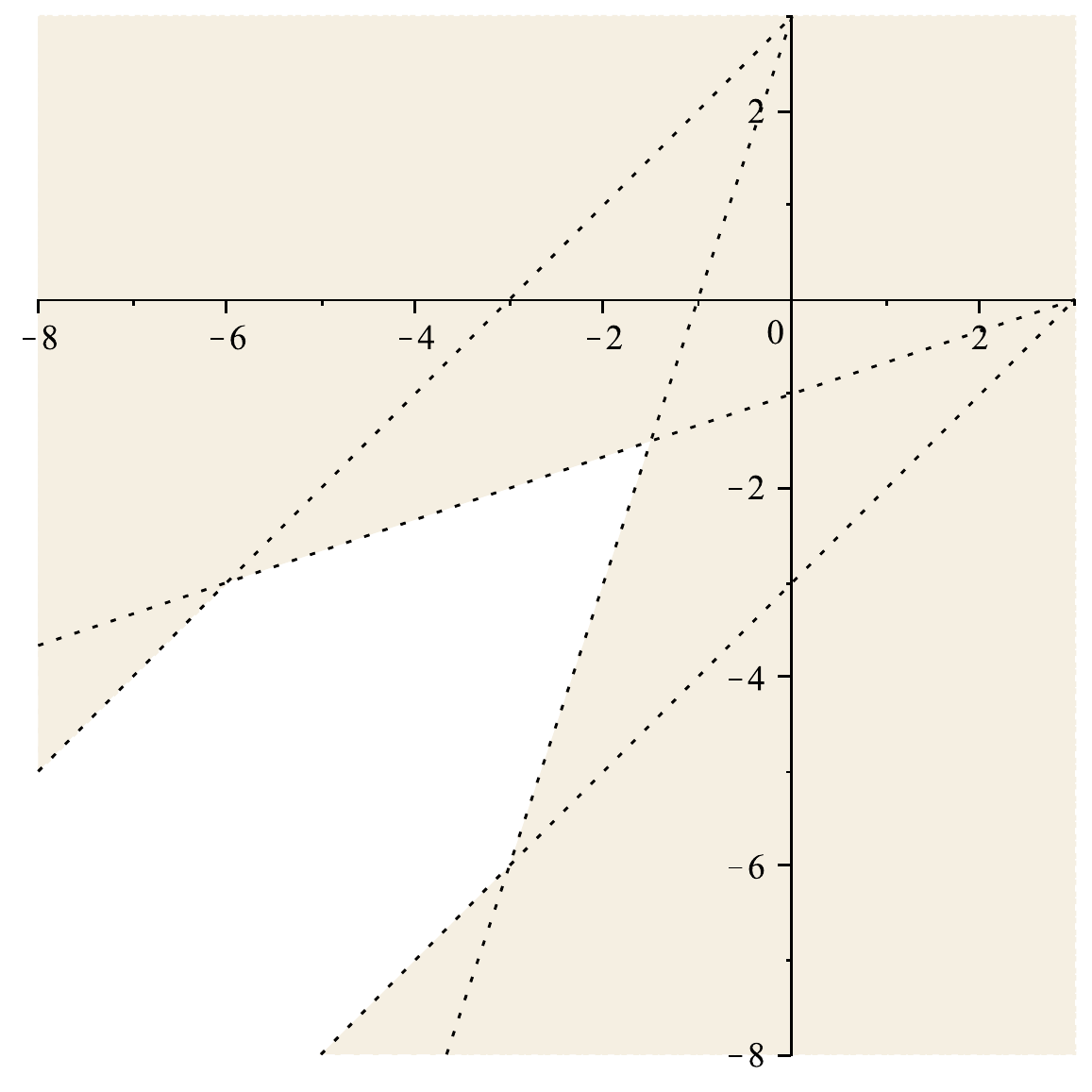}
\end{center}
\end{figure}
%%%%%%%%%%%%%%%%%%%%%%%%%%%%%%%%%%%%%%%%%%%%%%%%%%%%%%%%%%%%%%%%%%%%%%%%%%%%%%%%

It turns out that the case $n\ge 3$ has fewer drawn games, allowing a simpler description; it relies on the general ideas in Section~\ref{S:2}. By relabeling the pegs, it suffices to analyze the case of transferring the disks from Peg~1 to Peg~3  and the case from Peg~1 to Peg~1 .

\begin{thm}\label{thm:4}
Given $n\ge 3$ disks, three pegs and three weights of real numbers $w_{13}$, $w_{12}$, and $w_{23}$. Then, for the two-player Tower of Hanoi game, the first player wins every game, except in the case $w_{12}=w_{13}=w_{23}\leq 0$ for which the game is a draw.
\end{thm}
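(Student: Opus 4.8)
The plan is to exploit, exactly as in Corollary~\ref{cor:2}, that \one\ fully controls the play: on every turn she moves the smallest disk and \two's reply is the unique legal move, so a play is determined by \one's successive choices of a peg for the smallest disk, and it terminates precisely when the whole tower is reassembled on a peg permitted by the ending condition. Since the assertion depends on the weights only through the symmetric condition ``all equal and $\le 0$'', and since by relabelling the pegs it suffices (as noted just before the statement) to treat transfers from Peg~1 to Peg~3 and from Peg~1 to Peg~1, I am free to permute the three weights. I first dispose of the equal case $w_{12}=w_{13}=w_{23}=w$. Any terminating play uses an odd number of moves, so \one\ makes exactly one move more than \two; as every move scores the same $w$, the total score is $\Delta=w$. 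Hence if $w>0$ she wins, while if $w\le 0$ no terminating play gives $\Delta>0$. In the latter case \two\ can never force anything, and \one\ avoids losing by repeating forever a neutral loop (the three-peg analogue of the two-disk drawing loops $12-(13-12-23)^{\infty}$ of Theorem~\ref{thm:3}); this is the unique drawn case.

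Now suppose the weights are not all equal, and set $S_{ij}=2w_{ij}-w_{ik}-w_{jk}$ for $\{i,j,k\}=\{1,2,3\}$. A direct computation gives $S_{12}+S_{13}+S_{23}=0$, and these three numbers vanish simultaneously only when the weights are equal; hence some $S_{ij}>0$. The engine of the proof is the following \emph{shuttle}: \one\ keeps the smallest disk oscillating across the single edge $\{i,j\}$, i.e.\ she always moves it between Peg~$i$ and Peg~$j$. Whenever the smallest disk lands on Peg~$j$ it blocks that peg, so \two's forced reply is a move between Peg~$i$ and Peg~$k$, that is along edge $\{i,k\}$; symmetrically, from Peg~$i$ his reply lies along $\{j,k\}$. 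Thus every four-move block contributes $2w_{ij}$ to \one\ and $w_{ik}+w_{jk}$ to \two, a net gain of exactly $S_{ij}>0$ to $\Delta$, \emph{independently of how the larger disks happen to be arranged}. The shuttle is a (non-minimal) solution of the one-player puzzle: run on edge $\{i,j\}$ it transfers the whole tower to the end of that edge, and \one\ may also peel off at a chosen moment by playing one finishing move that lands the smallest disk, and hence the reassembled tower, on the required target peg.

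To win I would, for each target peg, select the profitable edge, run this shuttle, and peel off onto the target. Writing out the resulting score one finds a formula of the shape $\Delta=\tfrac12\,(3^{\,n-1}-1)\,S_{ij}+b$, where $b$ is a bounded boundary term coming only from the first move and the finishing move and depending on the weights but not on $n$. This is where $n\ge 3$ enters: the coefficient $\tfrac12(3^{\,n-1}-1)\ge 4$ of the positive quantity $S_{ij}$ is much larger than in the two-disk case, where it equals $1$ and the expression reduces exactly to the tag-inequalities of Theorem~\ref{thm:3}. Comparing the three candidate edges (and, for target Peg~3, finishing with a final edge incident to Peg~3 even when the favoured edge is $\{1,2\}$; the target Peg~1 cases, ending conditions (EC\ref{EC5}) and (EC\ref{EC6}), being symmetric after relabelling) and choosing the best, one obtains a strictly positive score for every weight vector that is not all-equal. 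This extra room is precisely what collapses the triangular drawn region of Figure~\ref{fig:4} to the single point $w_{12}=w_{13}=w_{23}$.

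The step I expect to be the main obstacle is this last uniform positivity: across all not-all-equal weight vectors and all ending conditions one must guarantee that the best available shuttle-and-finish play beats zero. The difficulty is that the finishing move may be forced onto a cheap edge and, when $\tfrac12(3^{\,n-1}-1)S_{ij}$ is small, could in principle cancel the accumulated advantage, so a single shuttle need not suffice; one must optimise over the favoured edge, the direction of the shuttle, and the peel-off point, and check that some choice wins. I would organise this by reducing, via relabelling, to the two targets Peg~3 and Peg~1, and in each reduced case split into the sign-regions of $(S_{12},S_{13},S_{23})$, exhibiting in every region an explicit admissible play of positive score, the equal-and-nonpositive locus being the only place where all candidates fail.
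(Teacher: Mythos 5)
Your handling of the equal-weights case is the same as the paper's and is fine, and your per-block accounting of the shuttle is correct: when \one\ oscillates the smallest disk on edge $\{i,j\}$, \two\ is indeed forced alternately onto $\{i,k\}$ and $\{j,k\}$, so each four-move block adds $S_{ij}=2w_{ij}-w_{ik}-w_{jk}$ to $\Delta$. But the core of your strategy has a fatal structural flaw: it is a play of \emph{bounded length}. For fixed $n$ the shuttle terminates after exactly $2\cdot 3^{n-1}-1$ moves, so the number of profitable blocks is frozen at $\frac{1}{2}(3^{n-1}-1)$ and your total score is one of finitely many fixed linear functions $\frac{1}{2}(3^{n-1}-1)S_{ij}+b$ of the weights; optimising over edges, directions and peel-off points does not change this. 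Such a strategy provably cannot work near the ray $w_{12}=w_{13}=w_{23}=c<0$: if all three weights lie within $\epsilon$ of $c$, any terminating play of odd length $L$ gives \one\ exactly one more move than \two, whence $\Delta\leq c+L\epsilon<0$ once $\epsilon$ is small relative to $-c$ and the (bounded) lengths of your candidates --- yet these weight vectors are not all equal, so the theorem asserts a first-player win. Concretely, for $n=3$, $w_{12}=-1+\epsilon$, $w_{13}=w_{23}=-1$, $0<\epsilon<1/9$, your best shuttle scores $9\epsilon-1<0$ (and lands on Peg~2, so it is not even admissible for (EC1)), and every other candidate is worse. The missing idea is a \emph{repeatable closed loop}: the paper fixes an intermediate position $P$ (the two smallest disks on Peg~$k$, where $w_{ij}$ is the minimal weight), reaches $P$ in an even number of moves by Corollary~\ref{cor:1}, exhibits a $16$-move cycle $s_3$ that returns to $P$ while increasing $\Delta$ by $2(w_{ik}+w_{jk}-2w_{ij})>0$, repeats it $\lambda$ times with $\lambda\to\infty$ as the weights approach the bad ray, and only then exits to the final peg in an odd number of moves by Theorem~\ref{thm:1}. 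Unbounded repetition is not an optimisation detail; it is exactly what collapses the drawn region of Figure~\ref{fig:4} to a single ray.

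A secondary gap: the claim that \one\ ``may peel off at a chosen moment by playing one finishing move'' is not available. During the shuttle the $n-1$ larger disks are assembled on a single peg only at the very last step, and only on one specific endpoint of the shuttled edge; redirecting the tower to a different final peg (as (EC1) requires when the profitable edge is $\{1,2\}$ and the target is Peg~3) needs a nontrivial connecting sequence whose parity and cost you would still have to control. Your inequality $S_{ij}>0$ for some edge is the right one --- it is the mirror image of the paper's pump, where \one\ works the small disks near Peg~$k$ and forces \two\ onto the cheapest edge --- but you need a mechanism to apply it an arbitrary number of times before terminating.
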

 
\begin{proof}
If $w_{12}=w_{13}=w_{23}=\alpha$, it is easy to see that, with any strategy, if the game is not drawn (because of infinite play), then the total score is $\Delta(n)=\alpha$ for any $n\geq 1$ because the total number of moves is always odd if the game ends. The result follows in this case (because if $\alpha < 0$, then \one\ need not end the game so it is drawn).

Let $\{i,j,k\}=\{1,2,3\}$ and suppose that $w_{ij}$ is the smallest number among $w_{ij}$, $w_{ik}$ and $w_{jk}$. Then, we have $w_{ik}+w_{jk}>2w_{ij}$. Let $P$ be an intermediate position where the two smallest disks are on Peg~$k$ and let the smallest disk among the disks on Peg~$i$ and Peg~$j$ be on Peg~$i$. We know, from Corollary~\ref{cor:1}, that the $n\ge3$ disks can be transferred from the starting position on Peg~1 to the intermediate position $P$ using an even sequence of moves $s_1$ and, from Theorem~\ref{thm:1}, that all the disks can be transferred from the final position on Peg~f to the intermediate position $P$ using an odd sequence of moves $s_2$. Then, it is clear that the reverse sequence of $s_2$ is an odd sequence ${s_2}^{-1}$ of moves transferring all the disks from the intermediate position $P$ to the final position. Thus, the sequence of moves $s_1{s_2}^{-1}$ is a finite strategy for transferring $n\ge3$ disks from Peg~1 to Peg~f. Let $p$ be the total score $\Delta(n)$ associated with this strategy. If $p>0$, \one\ wins following this strategy and the result is then proved. Suppose now that $p\le0$. We consider a new strategy where \one\ forces \two\ to play as follows:
\begin{enumerate}[1)]
\item
\one\ starts the game and forces \two\ to play the even number of moves of $s_1$, where the $n$ disks are transferred from the starting position on Peg~1 to the intermediate position $P$. After that, since $s_1$ is even, it is \one's turn.
\item
\one\ continues by forcing \two\ to play an even number of moves, whose sequence is denoted by $s_3$ and for which the total score $\Delta(n)$ is incremented by a positive value and where the $n$ disks are always at the position $P$ after $s_3$. Note that after this, it is always \one's turn since $s_3$ is even. This step is repeated until the total score $\Delta(n)$ is sufficiently large.
\item
\one\ finishes the game by forcing \two\ to play the odd number of moves of ${s_2}^{-1}$, where the $n$ disks are transferred from the intermediate position $P$ to the final position on Peg~f.
\end{enumerate}
It remains to detail the sequence $s_3$ and to verify that it effectively increments the total score $\Delta(n)$ by a positive value. In fact, the sequence $s_3$ consists of the 7 moves that transfer the two smallest disks from Peg~$k$ to Peg~$k$, as already seen in the proof of Theorem~\ref{thm:1}, and the move of the smallest disk of Peg~$i$ and Peg~$j$ from Peg~$i$ to Peg~$j$, and we repeat these $8$ moves twice so that we return to the intermediate position $P$. There are two possible sequences for $s_3$, that are
$$
\left(ik - jk - ik - ij - jk - ik - jk - ij\right)^2\ \text{or}\ \left( jk - ik - jk - ij - ik - jk - ik - ij \right)^2.
$$
In both cases, the total score $\Delta(n)$ is incremented by $2\left(w_{ik}+w_{jk}-2w_{ij}\right)$, which is strictly positive by hypothesis. So we have proved that \one\ wins the game by using the following strategy $s_1{s_3}^\lambda{s_2}^{-1}$, where
$$
\lambda = \left\lfloor\frac{-p}{2\left(w_{ik}+w_{jk}-2w_{ij}\right)}\right\rfloor+1.
$$
This completes the proof.
\end{proof}

We can adapt this proof to scoring play from an arbitrary position of disks for the three-peg case. But we have to remember that in the two-player setting a position carries also a memory of the last move.

\begin{cor}
Consider an arbitrary Tower of Hanoi position on $n\geq 3$ disks, three pegs and three weights of real numbers $w_{13}$, $w_{12}$, and $w_{23}$. Then the first player wins, unless $w_{12}=w_{13}=w_{23}\leq 0$, and provided the previous player did not move the smallest disk. 
\end{cor}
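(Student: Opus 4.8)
The plan is to reduce to the machinery already developed for Theorem~\ref{thm:4}, the only genuinely new ingredients being that the play starts from an arbitrary position rather than the Tower, and that a position now carries the memory of the last move. I would first record why the hypothesis ``the previous player did not move the smallest disk'' is exactly what is needed: it guarantees that \one\ may open by moving the smallest disk, and then, as in the proofs of Corollary~\ref{cor:2} and Theorem~\ref{thm:4}, she moves the smallest disk on every odd-numbered move while \two's replies are forced (after she places the smallest disk, the unique legal move not involving it is forced, namely to move the smaller of the two remaining exposed disks onto the larger one, or onto an empty peg). Since \two\ never touches the smallest disk, he can never complete the Tower; hence any terminating play ends on one of \one's moves and has odd total length, so \one\ makes exactly one more move than \two. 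Crucially, once \one\ is in control the last-move condition is self-propagating, as \two's forced move is never the smallest disk, so \one\ may again move the smallest disk on her next turn; thus the hypothesis is needed only for the opening move.

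Next I would promote Theorem~\ref{thm:1} and Corollary~\ref{cor:1} from the Tower to an arbitrary start: for $n\ge 3$, from any position $S$ in which the smallest disk is free, \one\ can drive the play to any target position in an odd number of moves, and to any intermediate position in an even number of moves, moving the smallest disk on each of her moves. The proof mirrors Theorem~\ref{thm:1}: induct on $n$ and condition on the peg of the largest disk LD in $S$ versus the target. If LD agrees, recurse on the $n-1$ smaller disks, which may use all three pegs freely; otherwise route the $n-1$ smaller disks onto the third peg (odd), move LD once, and route them to the target (odd), for an odd total. The recursion bottoms out at $n-1=2$ rather than at a single disk, so the degenerate ``one disk, same peg'' obstruction never arises; the two-disk base is the finite verification behind Table~\ref{table:1}, together with peg-relabeling symmetry and the observation that reversing an \one-controlled odd play is again an \one-controlled odd play. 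The even-length version then follows exactly as Corollary~\ref{cor:1} followed from Theorem~\ref{thm:1}, by reaching a position differing from the target in one extra move of a disk distinct from the smallest and appending that move.

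With this in hand the two cases go through as for Theorem~\ref{thm:4}. If $w_{12}=w_{13}=w_{23}=\alpha$, then every move scores $\alpha$ and, by the parity observation above, any terminating play yields $\Delta(n)=\alpha$; when $\alpha>0$ \one\ terminates and wins, while when $\alpha\le 0$ she simply refuses to complete the Tower (repeating a score-neutral loop forever), and since \two\ can never terminate, the game is drawn, giving the stated exception. If the weights are not all equal, choose $\{i,j,k\}=\{1,2,3\}$ with $w_{ij}$ minimal, so that $w_{ik}+w_{jk}>2w_{ij}$. Using the generalized reachability I would first steer from $S$ to an intermediate position $P$ with the two smallest disks on Peg~$k$ in an even number of moves, then insert the pump $s_3$ of Theorem~\ref{thm:4}, each application of which returns to $P$, keeps it \one's turn, and increments $\Delta(n)$ by $2(w_{ik}+w_{jk}-2w_{ij})>0$, and finally steer from $P$ to the final position in an odd number of moves. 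Repeating $s_3$ enough times makes $\Delta(n)>0$, so \one\ wins.

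The main obstacle is the generalized reachability lemma of the second paragraph: transporting Theorem~\ref{thm:1} and Corollary~\ref{cor:1} away from the Tower while keeping both the parity and the last-move memory under control. The delicate point is that a single \one-controlled play must be built as one alternating sequence (smallest disk, forced reply, smallest disk, $\dots$) rather than by concatenating independent odd paths, since each such path both begins and ends with a smallest-disk move and naive concatenation would move the smallest disk twice in succession; handling the two-disk base and this parity/junction bookkeeping is where the real work lies. Everything else is a direct transcription of the proof of Theorem~\ref{thm:4}.
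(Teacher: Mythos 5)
Your overall strategy is the paper's: steer to an intermediate position $P$ with the two smallest disks on Peg~$k$, pump the even loop $s_3$ until the score is positive, and finish with the odd sequence $s_2^{-1}$; and you correctly isolate the one genuinely new ingredient, namely reachability from an arbitrary starting position rather than from the Tower (the paper disposes of this with a one-line appeal to Theorem~\ref{thm:1}). The problem is that the lemma you propose to supply this ingredient rests on an induction whose base case is false. For two disks standing alone, non-terminating play is completely forced: \one\ cannot place the small disk on top of the large one without completing the two-disk tower on a non-final peg, which the rules forbid, so her move must go to the empty peg, and \two's reply is forced as well. The six non-tower positions therefore lie on a single forced $6$-cycle, and a non-tower position can be revisited only after a multiple of six moves --- never an odd number. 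Consequently ``arbitrary two-disk position to arbitrary two-disk position in an odd number of moves'' is simply not true, and it cannot follow from Table~\ref{table:1} plus relabelling plus reversal: those devices give Tower-to-arbitrary and arbitrary-to-Tower in odd length, but composing two such paths yields even length, moves the smallest disk twice in succession at the junction, and passes through a tower position, which is itself illegal in mid-game. So the recursion cannot bottom out at $n-1=2$ as you describe.

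You do flag ``the two-disk base and this parity/junction bookkeeping'' as where the real work lies, but that work is exactly what is missing, and the specific route you sketch for it provably cannot succeed. A repair is available: either verify the three-disk case directly as the base of the induction (with the largest disk present, the smallest disk has a third legal landing place, which breaks the forced cycle and restores \one's choices), or avoid full arbitrary-to-arbitrary reachability altogether by choosing the intermediate position adaptively --- explicitly manoeuvre the two smallest disks onto Peg~$k$ from the given position in an even number of \one-controlled moves, declare the resulting position to be $P$, and invoke Theorem~\ref{thm:1} (reversed) only for the final leg from $P$ to the final position, which is all that the paper's $s_2^{-1}$ requires. The rest of your argument (the propagation of \one's control from the hypothesis on the previous move, the equal-weights case, and the pumping computation) is sound and matches the paper.
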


\begin{proof}
Apply Theorem~\ref{thm:1} and the proof of Theorem~\ref{thm:4} (omitting $s_1$).
\end{proof}

%%%%%%%%%%%%%%%%%%%%%%%%%%%%%%%%%%%%%%%%%%%%%%%%%%%%%%%%%%%%%%%%%%%%%%%%%%%%%%%%
%%%%%%%%%%%%%%%%%%%%%%%%%%%%%%%%%%%%%%%%%%%%%%%%%%%%%%%%%%%%%%%%%%%%%%%%%%%%%%%%
\section{The minimal number of moves for winning}\label{S:6}
%%%%%%%%%%%%%%%%%%%%%%%%%%%%%%%%%%%%%%%%%%%%%%%%%%%%%%%%%%%%%%%%%%%%%%%%%%%%%%%%
%%%%%%%%%%%%%%%%%%%%%%%%%%%%%%%%%%%%%%%%%%%%%%%%%%%%%%%%%%%%%%%%%%%%%%%%%%%%%%%%

%%%%%%%%%%%%%%%%%%%%%%%%%%%%%%%%%%%%%%%%%%%%%%%%%%%%%%%%%%%%%%%%%%%%%%%%%%%%%%%%
\subsection{The minimal number of moves for winning normal play}
%%%%%%%%%%%%%%%%%%%%%%%%%%%%%%%%%%%%%%%%%%%%%%%%%%%%%%%%%%%%%%%%%%%%%%%%%%%%%%%%

Traditionally, in the one-player setting, the interest has often been focused on the minimal number of moves for transferring the tower. In this section we analyze our variations of the two-player game in this sense. It is not a big surprise that the minimal number of moves to win normal play is the same as the number of moves in the one-player minimal algorithm, but let us sum up the state of the art before we move on to the more challenging analysis of minimum number of moves for winning scoring play. 

\begin{thm}\label{thm:5}
The minimum number of moves for transferring $n\ge1$ disks from one peg to another peg is $2^{n}-1$. The minimum number of moves for transferring $n\ge2$ disks from one peg to the same peg is $2^{n+1}-1$, if the largest disk has to be moved; and it is seven if only the smallest disk has to be moved.
\end{thm}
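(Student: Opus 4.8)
The plan is to treat the three assertions separately, each time matching an upper and a lower bound, and to exploit throughout the forced-move reduction of Corollary~\ref{cor:2}: on three pegs every even-numbered move is forced, so a legal play is nothing but a one-player sequence in which the smallest disk is moved on each odd move, and in particular a first-player win needs an odd total number of moves. For the first assertion I would invoke the classical one-player lower bound, namely that transferring $n$ disks to a different peg requires at least $2^n-1$ moves; since the two-player game is a restriction of the one-player game this bound persists. For the matching upper bound I would check that the standard minimal algorithm already moves the smallest disk on every odd move and uses exactly $2^n-1$ of them, an odd number, so \one\ can realise it.

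For the second assertion, where the game must end on the starting peg with the largest disk LD moved, the upper bound is an explicit construction: send LD once around the cycle Peg~1 $\to$ Peg~2 $\to$ Peg~3 $\to$ Peg~1, interleaving four transfers of the $n-1$ smaller disks between the appropriate pairs of pegs. Each such transfer costs $2^{n-1}-1$ moves and there are three moves of LD, so the total is $4(2^{n-1}-1)+3=2^{n+1}-1$; I would verify that this sequence respects the odd/forced structure, so that each of the three moves of LD is precisely the forced move of \two.

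The lower bound here is the crux. I would argue that LD must leave Peg~1 and return; the gathering of the $n-1$ smaller disks required before the first departure already costs $\ge 2^{n-1}-1$ (by the first assertion applied to $n-1$ disks), and the final regathering costs $\ge 2^{n-1}-1$ as well. The delicate point is to exclude a cheap ``there and back'' pair of moves of LD: because a player may not move the disk just moved, and because after LD's first move the forced dynamics immediately pull the smaller disks off the peg on which they would otherwise wait, LD cannot return to Peg~1 without a complete repositioning of the smaller tower, which forces the full cyclic traversal and hence the two additional $(n-1)$-transfers. Formalising this is where I expect to spend most effort; the forced-move description of Corollary~\ref{cor:2}, a parity check (the total must be odd while $2^n$ is even), and an induction on $n$ with the first row of Table~\ref{table:1} as the base case are the tools I would use.

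For the third assertion, where only the smallest disk is required to move, I would first observe that the $n-1$ largest disks may remain on Peg~1 throughout, so the game reduces to the two smallest disks played on top of a fixed base, and the forced second-player move always selects the second-smallest disk exactly as in the $n=2$ game. The seven-move sequence $13-12-13-23-12-13-12$ then returns both active disks to Peg~1 and moves the smallest, giving the upper bound. For the lower bound I would rule out the odd totals $1,3,5$ by the same forced-move analysis: once the smallest disk leaves Peg~1 the second-smallest is forced to move, and it cannot be restored to Peg~1 together with the smallest in fewer than seven moves. The main obstacle across the whole theorem remains the lower bound of the second assertion; the remaining bounds are short once the forced-move reduction is in hand.
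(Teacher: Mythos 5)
Your treatment of the first and third assertions, and the $4(2^{n-1}-1)+3=2^{n+1}-1$ upper bound for the second, coincides with the paper's. The genuine gap is in the lower bound for the peg-to-same-peg case, exactly where you locate the crux, and it is not just a matter of formalisation: the claim you propose to formalise is false. You want to exclude the ``there and back'' option and conclude that LD is forced around the full cycle $1\to i\to j\to 1$. But the direct return $1\to i\to 1$ is realizable under the no-consecutive-move rule; it is in fact one of the paper's two displayed optimal strategies (the left table of Figure~\ref{fig:1}): transfer the $n-1$ smaller disks to Peg~$j$, move LD to Peg~$i$, let the smaller disks make some Peg-$j$-to-Peg-$j$ excursion, move LD home, retransfer. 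Each block begins and ends with a move of the smallest disk, so every junction is legal. Hence an argument concluding that LD must make three moves cannot succeed, and without it your explicit accounting reaches only $2(2^{n-1}-1)+2=2^n$, which your parity observation improves to $2^n+1$ --- far short of $2^{n+1}-1$.

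The paper instead splits on LD's return route and, in the direct-return branch, bounds the intermediate excursion of the $n-1$ smaller disks by $2^n-1$ via the induction hypothesis for the same-peg problem. Be warned that this step silently assumes the largest of those $n-1$ disks moves during the excursion, and nothing forces that: the excursion need only be nonempty (LD cannot move twice in succession) and must restore the sub-tower on Peg~$j$, which the two smallest disks can accomplish alone in seven moves while everything else stays put. For $n=4$ this gives a legal $7+1+7+1+7=23$-move completion with every disk moved at least twice, undercutting the claimed minimum $2^{n+1}-1=31$ (and $2^n+7<2^{n+1}-1$ for every $n\ge4$). So the branch you tried to amputate is not only alive, it is where both your sketch and the paper's own Case~1 run into trouble; any correct version of this lower bound has to confront it directly.
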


\begin{proof}
If we want to transfer $n\ge 1$ disks from Peg~1 to Peg~3, it is well known that the minimum number of moves needed is exactly $2^{n}-1$. Recall here in few words how to obtain this result. We prove it by induction on $n$. For $n=1$, the result is clear. Now, suppose that the result is true for transferring $n-1$ disks from one peg to another peg. If we want to transfer the $n$th disk from Peg~1 to Peg~3, the $n-1$ smallest disks must be on Peg~2. So, we transfer the $n-1$ smallest disks from Peg~1 to Peg~2 using $2^{n-1}-1$ moves by induction hypothesis, the largest disk from Peg~1 to Peg~3 and finally the $n-1$ smallest disks from Peg~2 to Peg~3 using also $2^{n-1}-1$ moves by induction hypothesis. This is the reason why the minimal number of moves needed for transferring $n$ disks from Peg~1 to Peg~3 is $2^{n}-1$.

Now, we want to transfer $n\ge 2$ disks from Peg~1 to Peg~1 with the condition that all disks have been moved; since the largest disk has to be returned to the first peg, each disk has to be moved at least twice. We proceed by induction on $n\ge 2$. For $n=2$, it is easy to check that the two minimal sequences of moves are of length $7$, that are
$$
12 - 13 - 12 - 23 - 13 - 12 - 13 \quad\text{and}\quad 13 -12 - 13 - 23 - 12 - 13 - 12.
$$
Suppose that the result is true for transferring $n-1$ disks from one peg to the same peg. First, if we want to transfer the $n$th disk from Peg~1 to another peg, Peg~$i$ with $\{i,j\}=\{2,3\}$, the $n-1$ smallest disks have to be transferred from Peg~1 to Peg~$j$ using at least $2^{n-1}-1$ moves. Then, we distinguish two cases.
\begin{case}
If we want to transfer the largest disk from Peg~$i$ to Peg~1, the $n-1$ smallest disks have to be transferred from Peg~$j$ to Peg~$j$ using at least $2^n-1$ moves by induction hypothesis. Finally, we transfer the $n-1$ smallest disks from Peg~j to Peg~1 with at least $2^{n-1}-1$ moves. So, the number of moves is at least $2^{n+1}-1$ if we follow this strategy.
\end{case}
\begin{case}
If we want to transfer the largest disk from Peg~$i$ to Peg~$j$, the $n-1$ smallest disks have to be transferred from Peg~$j$ to Peg~1 using at least $2^{n-1}-1$ moves. After that, for transferring the largest disk from Peg~$j$ to Peg~1, the $n-1$ smallest disks have to be transferred from Peg~1 to Peg~$i$ using at least $2^{n-1}-1$ moves. Finally, we transfer the $n-1$ smallest disks from Peg~$i$ to Peg~1 using at least $2^{n-1}-1$ moves. Thus, the number of moves is at least $2^{n+1}-1$ if we follow this strategy.
\end{case}
In all cases, we have proved that the minimal number of moves for transferring $n$ disks from Peg~1 to Peg~1 is exactly $2^{n+1}-1$, following the two possible strategies that have been represented in Figure~\ref{fig:1}.

If we only require that the smallest disk, instead of the largest disk, has to be moved, the minimal number of moves for transferring $n\ge 2$ disks from one peg to the same peg is seven. The result is obtained when we only move the two smallest disks and let the $n-2$ largest disks unmoved on the starting peg.
\end{proof}

%%%%%%%%%%%%%%%%%%%%%%%%%%%%%%%%%%%%%%%%%%%%%%%%%%%%%%%%%%%%%%%%%%%%%%%%%%%%%%%%
\begin{figure}
\begin{center}
\caption{Two possible strategies for transferring $n$ disks from one peg to the same peg.}\label{fig:1}
\vspace{1 mm}
\begin{tabular}{|c|c|}
\hline
Position & Number \\  & of Moves \\
\hline
\begin{tikzpicture}[scale=0.3]
\draw (0,0) -- (14,0);
\draw (2.5,0) -- (2.5,3);
\draw (7,0) -- (7,3);
\draw (11.5,0) -- (11.5,3);
\draw[fill=graygreen] (0.5,0) rectangle (4.5,0.5);
\draw[fill=graygreen] (1,0.5) rectangle (4,1);
\draw[fill=graygreen] (1.5,1) rectangle (3.5,1.5);
\draw[fill=graygreen] (2,1.5) rectangle (3,2);
\end{tikzpicture}
& 0 \\
\hline
\begin{tikzpicture}[scale=0.3]
\draw (0,0) -- (14,0);
\draw (2.5,0) -- (2.5,3);
\draw (7,0) -- (7,3);
\draw (11.5,0) -- (11.5,3);
\draw[fill=graygreen] (5,0) rectangle (9,0.5);
\draw[fill=graygreen] (10,0) rectangle (13,0.5);
\draw[fill=graygreen] (10.5,0.5) rectangle (12.5,1);
\draw[fill=graygreen] (11,1) rectangle (12,1.5);
\end{tikzpicture}
& $2^{n-1}$ \\
\hline
\begin{tikzpicture}[scale=0.3]
\draw (0,0) -- (14,0);
\draw (2.5,0) -- (2.5,3);
\draw (7,0) -- (7,3);
\draw (11.5,0) -- (11.5,3);
\draw[fill=graygreen] (0.5,0) rectangle (4.5,0.5);
\draw[fill=graygreen] (10,0) rectangle (13,0.5);
\draw[fill=graygreen] (10.5,0.5) rectangle (12.5,1);
\draw[fill=graygreen] (11,1) rectangle (12,1.5);
\end{tikzpicture}
& $2^{n}$ \\
\hline
\begin{tikzpicture}[scale=0.3]
\draw (0,0) -- (14,0);
\draw (2.5,0) -- (2.5,3);
\draw (7,0) -- (7,3);
\draw (11.5,0) -- (11.5,3);
\draw[fill=graygreen] (0.5,0) rectangle (4.5,0.5);
\draw[fill=graygreen] (1,0.5) rectangle (4,1);
\draw[fill=graygreen] (1.5,1) rectangle (3.5,1.5);
\draw[fill=graygreen] (2,1.5) rectangle (3,2);
\end{tikzpicture}
& $2^{n-1}-1$ \\
\hline
\hline
Total & $2^{n+1}-1$ \\
\hline
\end{tabular}\quad
\begin{tabular}{|c|c|}
\hline
Position & Number \\ & of Moves \\
\hline
\begin{tikzpicture}[scale=0.3]
\draw (0,0) -- (14,0);
\draw (2.5,0) -- (2.5,3);
\draw (7,0) -- (7,3);
\draw (11.5,0) -- (11.5,3);
\draw[fill=graygreen] (0.5,0) rectangle (4.5,0.5);
\draw[fill=graygreen] (1,0.5) rectangle (4,1);
\draw[fill=graygreen] (1.5,1) rectangle (3.5,1.5);
\draw[fill=graygreen] (2,1.5) rectangle (3,2);
\end{tikzpicture}
& 0 \\
\hline
\begin{tikzpicture}[scale=0.3]
\draw (0,0) -- (14,0);
\draw (2.5,0) -- (2.5,3);
\draw (7,0) -- (7,3);
\draw (11.5,0) -- (11.5,3);
\draw[fill=graygreen] (5,0) rectangle (9,0.5);
\draw[fill=graygreen] (10,0) rectangle (13,0.5);
\draw[fill=graygreen] (10.5,0.5) rectangle (12.5,1);
\draw[fill=graygreen] (11,1) rectangle (12,1.5);
\end{tikzpicture}
& $2^{n-1}$ \\
\hline
\begin{tikzpicture}[scale=0.3]
\draw (0,0) -- (14,0);
\draw (2.5,0) -- (2.5,3);
\draw (7,0) -- (7,3);
\draw (11.5,0) -- (11.5,3);
\draw[fill=graygreen] (9.5,0) rectangle (13.5,0.5);
\draw[fill=graygreen] (1,0) rectangle (4,0.5);
\draw[fill=graygreen] (1.5,0.5) rectangle (3.5,1);
\draw[fill=graygreen] (2,1) rectangle (3,1.5);
\end{tikzpicture}
& $2^{n-1}$ \\
\hline
\begin{tikzpicture}[scale=0.3]
\draw (0,0) -- (14,0);
\draw (2.5,0) -- (2.5,3);
\draw (7,0) -- (7,3);
\draw (11.5,0) -- (11.5,3);
\draw[fill=graygreen] (0.5,0) rectangle (4.5,0.5);
\draw[fill=graygreen] (5.5,0) rectangle (8.5,0.5);
\draw[fill=graygreen] (6,0.5) rectangle (8,1);
\draw[fill=graygreen] (6.5,1) rectangle (7.5,1.5);
\end{tikzpicture}
& $2^{n-1}$ \\
\hline
\begin{tikzpicture}[scale=0.3]
\draw (0,0) -- (14,0);
\draw (2.5,0) -- (2.5,3);
\draw (7,0) -- (7,3);
\draw (11.5,0) -- (11.5,3);
\draw[fill=graygreen] (0.5,0) rectangle (4.5,0.5);
\draw[fill=graygreen] (1,0.5) rectangle (4,1);
\draw[fill=graygreen] (1.5,1) rectangle (3.5,1.5);
\draw[fill=graygreen] (2,1.5) rectangle (3,2);
\end{tikzpicture}
& $2^{n-1}-1$ \\
\hline
\hline
Total & $2^{n+1}-1$ \\
\hline
\end{tabular}
\end{center}
\end{figure}
%%%%%%%%%%%%%%%%%%%%%%%%%%%%%%%%%%%%%%%%%%%%%%%%%%%%%%%%%%%%%%%%%%%%%%%%%%%%%%%%

Now, we estimate the minimal number of moves for winning the normal play two-player Tower of Hanoi. In case the game is drawn, then we say that the minimal number of moves is infinite.

\begin{thm}
Let $\MM_l(n)$ denote the minimal number of moves needed for winning a normal play game on $l\ge 3$ pegs and $n\ge 1$ disks. Then,
$$
\begin{array}{ccl}
\MM_3(n) & = & \left\{\begin{array}{l}
\begin{array}{ll}
2^{n}-1, & \text{ for }n\ge 1,\text{ for (EC1,4)},\\
\end{array}\\
\left\{\begin{array}{ll}
2^{n+1}-1, & \text{for }n\ge 2,\text{ for (EC2)},\\
7, & \text{for }n\ge 2,\text{ for (EC3)},\\
\end{array}\right.\\
\left\{\begin{array}{ll}
2^n-1, & \text{for }n\le 2,\text{ for (EC5)},\\
7, & \text{for }n\ge 3,\text{ for (EC5)}.\\
\end{array}\right.\\ 
\end{array}\right.\\ \\
\MM_l(n) & = & \left\{\begin{array}{ll}
1, & \text{for } n=1, \ l\ge 4 \text{ and for (EC1,4,5)},\\
3, & \text{for } n=2, \ l\ge 4 \text{ and for (EC4,5)},\\
\infty, & \text{otherwise for }l\ge4.\\
\end{array}\right.
\end{array}
$$
\end{thm}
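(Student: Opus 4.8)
The plan is to read every three-peg row off the one-player move counts of Theorem~\ref{thm:5}. Two facts from the earlier sections drive this. First, on three pegs the first player wins a normal-play game exactly when she makes the last move, i.e. when the total number of moves is odd. Second, by Corollary~\ref{cor:2} she can force \emph{any} line of play in which each of her (odd-numbered) moves transfers the smallest disk, the second player's replies then being uniquely determined. The principle I would record is therefore: on three pegs, $\MM_3(n)$ for a given ending condition is the least odd length of a legal one-player sequence realizing that condition with the smallest disk moved on every odd step. The lower bound is automatic, since any winning line is in particular a legal one-player sequence reaching the required terminal configuration and so has length at least the corresponding minimum in Theorem~\ref{thm:5}. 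For the upper bound I would use the classical fact that the extremal one-player sequences move the smallest disk on moves $1,3,5,\dots$; as every minimum appearing in Theorem~\ref{thm:5} ($2^n-1$, $2^{n+1}-1$ and $7$) is odd, the first player realizes it through the forcing of Corollary~\ref{cor:2} and wins.

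The three-peg rows are then a short tally against Theorem~\ref{thm:5}. Reaching a prescribed other peg costs $2^n-1$, giving (EC1). For (EC4) the tower may be assembled on any peg with the largest disk moved, and since ending on a non-starting peg costs $2^n-1$ while returning to Peg~1 costs $2^{n+1}-1$, the minimum is $2^n-1$. Returning to Peg~1 with the largest disk moved gives $2^{n+1}-1$ for (EC2) ($n\ge2$), and returning to Peg~1 requiring only that the smallest disk move gives $7$ for (EC3) ($n\ge2$), the target here being fixed so that no cheaper destination competes. The one genuinely two-sided row is (EC5): assembling on another peg costs $2^n-1$, whereas merely shuffling the two smallest disks and restoring them to Peg~1 (leaving the bottom $n-2$ disks inert, as in the last paragraph of the proof of Theorem~\ref{thm:5}) costs $7$. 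Hence $\MM_3(n)=\min(2^n-1,7)$, which equals $2^n-1$ for $n\le2$ and $7$ for $n\ge3$; the matching lower bound is again read off Theorem~\ref{thm:5}, since an (EC5)-solution either transfers the whole tower (length $\ge 2^n-1$) or returns to Peg~1 with the smallest disk moved (length $\ge7$).

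For $l\ge4$ the statement is delivered by Theorem~\ref{thm:2} together with the elementary observations recorded after it. When $n\ge3$ the game is drawn under every ending condition, so $\MM_l(n)=\infty$. For $n=2$ under (EC1,2,3) the second player can keep the largest disk away from the final peg and the game is drawn, giving $\infty$; under (EC4,5) I would exhibit the forced three-move win — the first player displaces the smallest disk, the second player is then compelled to move the largest disk onto an empty peg, and the first player caps it — and note that two disks cannot be assembled in one move, so $\MM_l(2)=3$. For $n=1$ under (EC1,4,5) the first player wins outright in a single move, whereas under (EC2,3) the single-disk special rule denies her a win, so $\MM_l(1)=\infty$ there.

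The only steps needing real care are two. First, one must confirm that the extremal sequences of Theorem~\ref{thm:5} — notably the $2^{n+1}-1$ return depicted in Figure~\ref{fig:1} and the seven-move shuffle — do move the smallest disk on every odd step, so that Corollary~\ref{cor:2} genuinely lets the first player force them; this is transparent for the $2^n-1$ transfer but should be traced for the other two, and for the shuffle one checks that the untouched bottom disks never create an alternative move for the second player. Second, in the $l\ge4$, $n=2$, (EC4,5) case one must verify that after the smallest disk is displaced the second player has no option but to move the largest disk to an empty peg, so that the three-move win is genuinely forced. Everything else is a direct appeal to the cited results.
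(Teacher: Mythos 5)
Your proposal is correct and follows essentially the same route as the paper, whose entire proof is the single sentence ``Apply the results in this section to the two-player setting'': you read the three-peg values off Theorem~\ref{thm:5} via the forcing mechanism of Corollary~\ref{cor:2} (noting, rightly, that the extremal sequences are odd in length and move the smallest disk on odd steps), and the $l\ge 4$ rows off Theorem~\ref{thm:2} and the remarks following it. Your version merely makes explicit the verifications the paper leaves implicit.
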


\begin{proof}
Apply the results in this section to the two-player setting.
\end{proof}

%%%%%%%%%%%%%%%%%%%%%%%%%%%%%%%%%%%%%%%%%%%%%%%%%%%%%%%%%%%%%%%%%%%%%%%%%%%%%%%%
\subsection{The minimal number of moves for winning scoring play}
%%%%%%%%%%%%%%%%%%%%%%%%%%%%%%%%%%%%%%%%%%%%%%%%%%%%%%%%%%%%%%%%%%%%%%%%%%%%%%%%

When the players move blindly (ignoring winning) and just follow the classical minimal algorithm, we obtain the total scores according to the following two lemmas.

\begin{lem}\label{lem:1}
Given $n \geq 1$ disks, three pegs and three weights of real numbers $w_{12}$, $w_{13}$, and $w_{23}$. Then, for the two-player weighted Tower of Hanoi game of transferring $n$ disks from Peg $1$ to Peg $3$ by the minimal algorithm, the total score is 
\begin{itemize}
\item
$\Delta_{13}(n) = w_{13}$ if $n$ is odd; 
\item
$\Delta_{13}(n) = w_{12}+w_{23} - w_{13}$ if $n$ is even.
\end{itemize}
\end{lem}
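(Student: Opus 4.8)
The plan is to prove the lemma by induction on $n$, but to make the induction go through I would first strengthen the statement to cover transfers between an arbitrary ordered pair of pegs. Writing $k$ for the peg distinct from $i$ and $j$, I claim that for \emph{every} ordered pair $(i,j)$ the minimal algorithm yields
$$
\Delta_{ij}(n) = w_{ij} \ \text{ if } n \text{ is odd}, \qquad \Delta_{ij}(n) = w_{ik}+w_{kj}-w_{ij} \ \text{ if } n \text{ is even}.
$$
The desired formula is then the special case $(i,j)=(1,3)$. The base case $n=1$ is immediate: the minimal algorithm is the single move $ij$, played by the first player, so $\Delta_{ij}(1)=w_{ij}$, matching the odd case.

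The heart of the argument is the recursion
$$
\Delta_{ij}(n) = \Delta_{ik}(n-1) - w_{ij} + \Delta_{kj}(n-1), \qquad n\geq 2 .
$$
To set it up, recall that the minimal algorithm for transferring $n$ disks from Peg~$i$ to Peg~$j$ consists of three phases: transfer the $n-1$ smallest disks from Peg~$i$ to Peg~$k$, then move the largest disk from Peg~$i$ to Peg~$j$, then transfer the $n-1$ smallest disks from Peg~$k$ to Peg~$j$. Each outer phase uses $2^{n-1}-1$ moves.

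The delicate point, and the step I expect to require the most care, is the turn bookkeeping, namely determining which player executes each move so that the signs in the recursion come out correctly. Since $2^{n-1}-1$ is odd, after the first phase an odd number of moves has been played, so it is the \emph{second} player who moves the largest disk; that single move therefore contributes $-w_{ij}$ to $\Delta=A-B$. Moreover, exactly $2^{n-1}$ moves (an even number) precede the third phase, so the first player again moves first at the start of the third phase. Consequently each outer phase is played precisely as a fresh game in which the first player starts, and hence contributes $+\Delta_{ik}(n-1)$ and $+\Delta_{kj}(n-1)$ respectively. Summing the three contributions gives the recursion above.

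Finally I would close the induction by substituting the strengthened hypothesis into the recursion and checking the two parities, which is routine algebra using only $w_{ab}=w_{ba}$. When $n$ is even, $n-1$ is odd, so $\Delta_{ik}(n-1)=w_{ik}$ and $\Delta_{kj}(n-1)=w_{kj}$, whence $\Delta_{ij}(n)=w_{ik}+w_{kj}-w_{ij}$. When $n$ is odd, the even-case expressions for the two outer phases are substituted and simplify, after using $w_{ik}=w_{ki}$ and $w_{jk}=w_{kj}$, to leave $\Delta_{ij}(n)=w_{ij}$. Specializing to $(i,j)=(1,3)$ yields the two cases of the lemma.
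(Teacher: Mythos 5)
Your proposal is correct and follows essentially the same route as the paper: induction on $n$ via the three-phase decomposition of the minimal algorithm, with the parity of $2^{n-1}-1$ forcing the second player to move the largest disk and each outer phase starting on the first player's turn. The only difference is presentational: you make explicit the strengthening to arbitrary ordered peg pairs $(i,j)$ that the paper uses implicitly when it invokes $\Delta_{12}(n-1)$ and $\Delta_{23}(n-1)$ in its induction hypothesis.
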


\begin{proof}
The statement is that \one\ gets $w_{13}$ (or $w_{12}+w_{23}-w_{13}$) more points than \two\ when $n$ is odd (or $n$ is even, resp.). We prove it by induction on $n$. 

When $n=1$, \one\ gets $\Delta_{13}(n)=w_{13}$ points by moving one disk from Peg~1 to Peg~3. When $n=2$, by using the usual minimal algorithm, the difference of points is $\Delta_{13}(n)=w_{12}+w_{23}-w_{13}$.

Now suppose that the statement is true for $n-1$. \one\ takes the strategy based on the minimal algorithm.

When $n$ is odd, then the movement of the smaller $n-1$ disks from Peg~1 to Peg~2 gives \one\ $\Delta_{12}(n-1) = w_{13}+w_{23}-w_{12}$ more points since $n-1$ is even and by the assumption of induction. Then, \two\ gets $w_{13}$ points by moving the largest disk from Peg~1 to Peg~3. Finally, the movement of the smaller $n-1$ disks from Peg~2 to Peg~3 gives \one\ $\Delta_{23}(n-1) = w_{12}+w_{13}-w_{23}$ more points. So, the difference of the total points is $\Delta_{13}(n)=(w_{13}+w_{23}-w_{12})-w_{13}+(w_{12}+w_{13}-w_{23})=w_{13}$, when $n$ is odd.

When $n$ is even, using the same argument, \one\ gets $w_{12}$ and $w_{23}$ more points in the phases of moving the smaller $n-1$ disks from Peg~1 to Peg~2 and Peg~2 to Peg~3, respectively by the assumption of induction. So, the difference of the total points is $\Delta_{13}(n) = w_{12}+w_{23}-w_{13}$.
\end{proof}

\begin{lem}
Given $n\geq 2$ disks, three pegs and three weights of real numbers $w_{12}$, $w_{13}$, and $w_{23}$. Then, for the two-player weighted Tower of Hanoi game of transferring $n$ disks from Peg $1$ to Peg $1$ by the minimal algorithm, if we suppose that the largest disk be moved, the total score is 
\begin{itemize}
\item
$\Delta_{11}(n) = 3w_{23}-w_{12}-w_{13}$ if $n$ is odd; 
\item
$\Delta_{11}(n) = w_{12}+w_{13} - w_{23}$ if $n$ is even.
\end{itemize}
\end{lem}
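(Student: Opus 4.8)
The plan is to reduce everything to Lemma~\ref{lem:1} by analysing the explicit minimal Peg~$1$-to-Peg~$1$ algorithm. By Theorem~\ref{thm:5} and the right-hand strategy of Figure~\ref{fig:1}, one transfers $n$ disks from Peg~$1$ back to Peg~$1$ (with the largest disk moved) in $2^{n+1}-1$ moves, and this sequence splits, with $\{i,j\}=\{2,3\}$, into seven blocks: a transfer of the $n-1$ smallest disks Peg~$1\to$~Peg~$j$; the move of the largest disk Peg~$1\to$~Peg~$i$; the transfer Peg~$j\to$~Peg~$1$; the move Peg~$i\to$~Peg~$j$; the transfer Peg~$1\to$~Peg~$i$; the move Peg~$j\to$~Peg~$1$; and the transfer Peg~$i\to$~Peg~$1$. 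The four transfer blocks each move $n-1$ disks between two \emph{distinct} pegs, so their internal score differences are furnished directly by Lemma~\ref{lem:1}, applied with the pegs relabelled; here I use that the formula of Lemma~\ref{lem:1} is invariant under relabelling and that $w_{ab}=w_{ba}$, whence $\Delta_{ab}(m)=\Delta_{ba}(m)$. No separate induction on $\Delta_{11}$ is needed; this is a one-step reduction.

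The heart of the argument is the turn bookkeeping, which fixes the sign of each block's contribution to $\Delta_{11}(n)=A_{11}(n)-B_{11}(n)$. Each transfer block has the odd length $2^{n-1}-1$, and a short count shows that the three largest-disk moves occur at move numbers $2^{n-1}$, $2^n$ and $2^n+2^{n-1}$, all even since $n\geq 2$. Consequently each transfer block begins on an odd-numbered (that is, \one's) move, so \one\ is its first mover and it contributes $+\Delta_{13}(n-1)$ or $+\Delta_{12}(n-1)$ according to its endpoints, while the three largest-disk moves fall on even-numbered (that is, \two's) moves and contribute $-w_{12}$, $-w_{23}$ and $-w_{13}$. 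Summing the seven contributions gives
$$\Delta_{11}(n)=2\Delta_{13}(n-1)+2\Delta_{12}(n-1)-w_{12}-w_{13}-w_{23}.$$

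It then remains to substitute the values from Lemma~\ref{lem:1} according to the parity of $n-1$ and simplify. When $n$ is even, $n-1$ is odd and $\Delta_{13}(n-1)=w_{13}$, $\Delta_{12}(n-1)=w_{12}$, giving $\Delta_{11}(n)=w_{12}+w_{13}-w_{23}$; when $n$ is odd, $n-1$ is even and $\Delta_{13}(n-1)=w_{12}+w_{23}-w_{13}$, $\Delta_{12}(n-1)=w_{13}+w_{23}-w_{12}$, giving $\Delta_{11}(n)=3w_{23}-w_{12}-w_{13}$, exactly the two claimed cases. I expect the only delicate point to be the parity/sign accounting above --- identifying correctly, for each block, whether \one\ or \two\ is its first mover and which player banks each largest-disk weight; once that is pinned down the algebra is immediate, and a check against the two length-$7$ sequences for $n=2$ recorded in the proof of Theorem~\ref{thm:5} confirms it. Finally, one should remark that the left-hand minimal strategy of Figure~\ref{fig:1} yields the same total, so $\Delta_{11}(n)$ is well defined.
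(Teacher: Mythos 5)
Your decomposition is exactly the paper's: the seven-block expansion of the first minimal Peg~1-to-Peg~1 algorithm, the identity $\Delta_{11}(n)=2\bigl(\Delta_{13}(n-1)+\Delta_{12}(n-1)\bigr)-(w_{12}+w_{13}+w_{23})$, and substitution from Lemma~\ref{lem:1} according to the parity of $n-1$; your explicit turn bookkeeping (the three largest-disk moves landing on the even move numbers $2^{n-1}$, $2^{n}$ and $3\cdot 2^{n-1}$) is a welcome justification of the signs that the paper leaves implicit, and your algebra is correct in both parity cases. The one place you fall short of the paper is the second minimal algorithm (the left-hand strategy of Figure~\ref{fig:1}): you assert that it yields the same total but do not prove it, and this is precisely where the paper's induction on $n$ is genuinely used, since that algorithm contains a block transferring the $n-1$ smallest disks from Peg~3 back to Peg~3, whose score $\Delta_{33}(n-1)$ is not covered by Lemma~\ref{lem:1} but only by the (relabelled) induction hypothesis of the present lemma. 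So your claim that no induction is needed is correct for the algorithm you analyse, but not for the well-definedness remark you append; to close it you would either reinstate the induction or compute the second algorithm's score as the paper does.
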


\begin{proof}
By induction on $n\ge2$. For $n=2$, we have already seen in the proof of Theorem~\ref{thm:1} that the two minimal sequences of moves for transferring two disks from Peg~1 to Peg~1 are
 $$
12 - 13 - 12 - 23 - 13 - 12 - 13 \quad\text{and}\quad 13 -12 - 13 - 23 - 12 - 13 - 12.
$$
In both cases, we obtain that $\Delta_{11}(2)=w_{12}+w_{13} - w_{23}$. Suppose now that the result is true for $n-1\ge2$. There are two minimal algorithms for transferring $n$ disks from Peg~1 to Peg~1. We recall them below. The first minimal algorithm is:
\begin{enumerate}[i)]
\item
the $n-1$ smallest disks are transferred from Peg~1 to Peg~3 using the $2^{n-1}-1$ moves of the minimal algorithm,
\item
the largest disk is transferred from Peg~1 to Peg~2,
\item
the $n-1$ smallest disks are transferred from Peg~3 to Peg~1 using the $2^{n-1}-1$ moves of the minimal algorithm,
\item
the largest disk is transferred from Peg~2 to Peg~3,
\item
the $n-1$ smallest disks are transferred from Peg~1 to Peg~2 using the $2^{n-1}-1$ moves of the minimal algorithm,
\item
the largest disk is transferred from Peg~3 to Peg~1,
\item
the $n-1$ smallest disks are transferred from Peg~2 to Peg~1 using the $2^{n-1}-1$ moves of the minimal algorithm.
\end{enumerate}
Thus, we have
$$
\Delta_{11}(n) \begin{array}[t]{l}
 = \Delta_{13}(n-1) -w_{12} + \Delta_{31}(n-1) - w_{23} + \Delta_{12}(n-1) - w_{13} + \Delta_{21}(n-1) \\
 = 2\left(\Delta_{13}(n-1) + \Delta_{12}(n-1)\right) - \left(w_{12}+w_{13}+w_{23}\right). \\
\end{array} 
$$
Then, from Lemma~\ref{lem:1}, if $n$ is even, we obtain that
$$
\Delta_{11}(n) = 2\left(w_{13} + w_{12}\right) - \left(w_{12}+w_{13}+w_{23}\right) = w_{12} + w_{13} - w_{23}
$$
and, if $n$ is odd, we have
$$
\Delta_{11}(n) = 2\left(\left(w_{12}+w_{23}-w_{13}\right) + \left(w_{13}+w_{23}-w_{12}\right)\right) - \left(w_{12}+w_{13}+w_{23}\right) = 3w_{23}-w_{12}-w_{13}.
$$
Finally, we consider the second minimal algorithm, that is:
\begin{enumerate}[i)]
\item
the $n-1$ smallest disks are transferred from Peg~1 to Peg~3 using the $2^{n-1}-1$ moves of the minimal algorithm,
\item
the largest disk is transferred from Peg~1 to Peg~2,
\item
the $n-1$ smallest disks are transferred from Peg~3 to Peg~3 using the $2^n-1$ moves of the minimal algorithm,
\item
the largest disk is transferred from Peg~2 to Peg~1,
\item
the $n-1$ smallest disks are transferred from Peg~3 to Peg~1 using the $2^{n-1}-1$ moves of the minimal algorithm.
\end{enumerate}
Thus, we have
$$
\Delta_{11}(n) = \Delta_{13}(n-1) - w_{12} + \Delta_{33}(n-1) - w_{12} + \Delta_{31}(n-1) = 2\Delta_{13}(n-1) + \Delta_{33}(n-1) - 2w_{12}.
$$
Then, from Lemma~\ref{lem:1} for $\Delta_{13}(n-1)$ and by induction hypothesis for $\Delta_{33}(n-1)$, if $n$ is even, we have
$$
\Delta_{11}(n) = 2w_{13} + \left(3w_{12}-w_{13}-w_{23}\right) -2w_{12} = w_{12} + w_{13} - w_{23}
$$
and, if $n$ is odd, we have
$$
\Delta_{11}(n) = 2\left(w_{12}+w_{23}-w_{13}\right) + \left(w_{13}+w_{23} - w_{12}\right) - 2w_{12} = 3w_{23}-w_{12}-w_{13}.
$$
This completes the proof.
\end{proof}

We estimate the minimal number of moves for winning scoring Tower of Hanoi under the five ending conditions (EC1-5). We recall that the starting peg is Peg~1 and for (EC1) we suppose the tower is transferred to Peg~3.

\begin{thm}\label{thm:7}
Let $\M(n)$ denote the minimal number of moves needed for a winning game on three pegs and $n\geq1$ disks. Then, for $n\leq2$,
\begin{itemize}
\item
for (EC1):
$$
\M(1) = \left\{\begin{array}{ll}
 1 & \text{ if } w_{13}\neq 0,\\
 \infty & \text{ otherwise},
\end{array}\right.
$$
$$
3 \leq \M(2) \leq \left\{\begin{array}{ll}
  3 & \text{ if } w_{12}+w_{23}>w_{13}, \\
  5 & \text{ if } 3w_{13}>w_{12}+w_{23}, \\
  \infty & \text{ otherwise}, \\
\end{array}\right.
$$ 
\item
for (EC2,3):
$$
\M(1) = \infty,
$$
$$
7 \leq \M(2) \leq \left\{\begin{array}{ll}
 7 & \text{ if } w_{12}+w_{13}>w_{23}, \\
 \infty & \text{ otherwise}, \\
\end{array}\right.
$$
\item
for (EC4,5):
$$
\M(1) = \left\{\begin{array}{ll}
 1 & \text{ if } \max\{w_{12},w_{13}\}\neq 0,\\
 \infty & \text{ otherwise},
\end{array}\right.
$$
$$
3 \leq \M(2) \leq \left\{\begin{array}{ll}
 3 & \text{ if } w_{12}+w_{23}>w_{13} \text{ or } w_{13}+w_{23}>w_{12}, \\
 5 & \text{ if } 3w_{13}>w_{12}+w_{23}  \text{ or } 3w_{12}>w_{13}+w_{23}, \\
 7 & \text{ if } w_{12}+w_{13}>w_{23}, \\
 \infty & \text{ otherwise.} \\
\end{array}\right.
$$
\end{itemize}
For $n\geq3$, if $w_{12}=w_{13}=w_{23}=\alpha$, we have
$$
\M(n) = \left\{\begin{array}{ll}
 2^{n}-1 & \text{ if } \alpha>0, \\
 \infty & \text{ if } \alpha\leq 0. \\
\end{array}\right.
$$
Otherwise, suppose that
$$
\beta_1 = \left\{\begin{array}{ll}
 w_{13} & \text{ if } n \text{ odd},\\
 w_{12}+w_{23}-w_{13} & \text{ if } n \text{ even}, \\
\end{array}\right.
$$
$$
\beta_2 = \left\{\begin{array}{ll}
 3w_{23}-w_{12}-w_{13} & \text{ if } n \text{ odd},\\
 w_{12}+w_{13}-w_{23} & \text{ if } n \text{ even}, \\
\end{array}\right.
$$
$$
\beta_3 = \left\{\begin{array}{ll}
\max\{w_{13}, w_{12}\} & \text{ if } n \text{ odd},\\
\max\{w_{12}+w_{23}-w_{13}, w_{13}+w_{23}-w_{12}\} & \text{ if } n \text{ even}, \\
\end{array}\right.
$$
$$
\text{and }\gamma = \max \left\{ w_{ij}+w_{ik}-2w_{jk} \ \middle\vert\ \{i,j,k\}=\{1,2,3\}  \right\}.
$$
Then, 
\begin{itemize}
\item
for (EC1):
$$
\M(n) = 2^{n}-1 \text{ if } \beta_1>0,
$$
otherwise,
$$
2^n \leq \M(n) \leq  2^{n}+15+16\left\lfloor\frac{-\beta_1}{2\gamma}\right\rfloor,
$$
except for $n=3$ with (i) $\gamma = w_{13}+w_{23}-2w_{12}$ and (ii) $\gamma = w_{12}+w_{13}-2w_{23}$, where
$$
8 \leq \M(n) \leq \min\biggl\{27+16\left\lfloor\frac{2(w_{12}+w_{23})-3w_{13}}{2\gamma}\right\rfloor, 29+16\left\lfloor\frac{-w_{13}}{2\gamma}\right\rfloor\biggr\},
$$
\item
for (EC2):
$$
\M(n) = 2^{n+1}-1 \text{ if } \beta_2 > 0, 
$$
$$
2^{n+1} \leq \M(n) \leq 2^{n+1}+15+16\left\lfloor\frac{-\beta_2}{2\gamma}\right\rfloor \text{ otherwise,}
$$
\item
for (EC3):
$$
\M(n) = \left\{\begin{array}{ll}
7 & \text{ if } w_{12}+w_{13}>w_{23},\\
15 & \text{ if } w_{12}+w_{13}\leq w_{23} \text{ and } \beta_2 > 0, \\
\end{array}\right.
$$
$$
16 \leq \M(n) \leq 
 31+16\left\lfloor\frac{-\beta_2}{2\gamma}\right\rfloor \text{ otherwise},
$$
\item
for (EC4):
$$
\M(n) = 2^n-1 \text{ if } \beta_3 > 0,
$$
otherwise,
$$
2^{n} \leq \M(n) \leq 
 \min\biggl\{
2^{n}+15+16\left\lfloor\frac{-\beta_3}{2\gamma}\right\rfloor,
2^{n+1}+15+16\left\lfloor\frac{-\beta_2}{2\gamma}\right\rfloor
\biggr\},
$$
except for (i) $n=3$ and $\gamma = w_{13}+w_{23}-2w_{12}$, where
$$
8 \leq \M(3) \leq 
 \min\biggl\{
23+16\left\lfloor\frac{-w_{12}}{2\gamma}\right\rfloor, 
27+16\left\lfloor\frac{2(w_{12}+w_{23})-3w_{13}}{2\gamma}\right\rfloor, 
29+16\left\lfloor\frac{-w_{13}}{2\gamma}\right\rfloor,$$
$$
31+16\left\lfloor\frac{-\beta_2}{2\gamma}\right\rfloor\biggr\}, 
$$
for (ii) $n=3$ and $\gamma = w_{12}+w_{23}-2w_{13}$, where
$$
8 \leq \M(3) \leq 
 \min\biggl\{
23+16\left\lfloor\frac{-w_{13}}{2\gamma}\right\rfloor, 
27+16\left\lfloor\frac{2(w_{13}+w_{23})-3w_{12}}{2\gamma}\right\rfloor, 
29+16\left\lfloor\frac{-w_{12}}{2\gamma}\right\rfloor,$$
$$
31+16\left\lfloor\frac{-\beta_2}{2\gamma}\right\rfloor\biggr\},
$$
and for (iii) $n=3$ and $\gamma = w_{12}+w_{13}-2w_{23}$, where
$$
8 \leq \M(3) \leq 
 \min\biggl\{
27+16\left\lfloor\frac{2(w_{12}+w_{23})-3w_{13}}{2\gamma}\right\rfloor, 
29+16\left\lfloor\frac{-w_{13}}{2\gamma}\right\rfloor,
$$
$$
27+16\left\lfloor\frac{2(w_{13}+w_{23})-3w_{12}}{2\gamma}\right\rfloor, 
29+16\left\lfloor\frac{-w_{12}}{2\gamma}\right\rfloor,
31+16\left\lfloor\frac{-\beta_2}{2\gamma}\right\rfloor\biggr\},
$$
\item
for (EC5): 
$$
7 \leq \M(n) \leq \left\{\begin{array}{ll}
7 & \text{ if } w_{12}+w_{13}>w_{23}
\text{ or } (n=3 \text{ and } \beta_3>0),\\
15 & \text{ if } (w_{12}+w_{13}\leq w_{23} \text{ and } \beta_2>0) \text{ or }  (n=4 \text{ and } \beta_3>0), \\
2^n-1 & \text{ if } \beta_3>0,\\
\end{array}\right.
$$
otherwise,
$$
8 \leq \M(n) \leq 
 \min\biggl\{
31+16\left\lfloor\frac{-\beta_2}{2\gamma}\right\rfloor,
2^{n}+15+16\left\lfloor\frac{-\beta_3}{2\gamma}\right\rfloor
\biggr\},
$$ 
except for the cases (i), (ii), and (iii) that are the same with (EC4).
\end{itemize}
\end{thm}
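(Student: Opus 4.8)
The plan is to reduce the whole statement to three already-available ingredients: the minimal transfer counts of Theorem~\ref{thm:5} ($2^n-1$ for (EC1,4), $2^{n+1}-1$ for (EC2), and $7$ for (EC3,5)), the closed-form scores of Lemma~\ref{lem:1} and the second of the two preceding lemmas, and the score-raising loop $s_3$ extracted from the proof of Theorem~\ref{thm:4}. For each ending condition I fix the cheapest move sequence $c_0$ that merely \emph{satisfies} that condition and read off its total score from the lemmas; this score is exactly the relevant $\beta_i$ (for (EC4) one picks the better of transferring to Peg~2 or Peg~3, which is why $\beta_3$ is a maximum). The lower bounds $\M(n)\ge c_0$ are then immediate from Theorem~\ref{thm:5}, since a winning game must in particular terminate under the ending condition.

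First I would dispose of the dichotomy at the level of the minimal algorithm. If the cheapest qualifying sequence already has positive score (that is $\beta_1>0$, $\beta_2>0$, $\beta_3>0$, or $w_{12}+w_{13}>w_{23}$ in the respective cases), then the lower bound is attained and $\M(n)=c_0$; this produces every exact value in the statement. The homogeneous case $w_{12}=w_{13}=w_{23}=\alpha$ must be separated out first, using the observation from the proof of Theorem~\ref{thm:4} that \emph{any} terminating game scores exactly $\alpha$ (the move count is always odd): hence \one\ can win iff $\alpha>0$, in which case the minimal qualifying transfer realizes $\M(n)$, and otherwise no finite sequence wins and $\M(n)=\infty$.

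The substance is the construction when $\beta_i\le0$. Here I invoke the $16$-move loop $s_3$: inserted at an intermediate position where the two smallest disks rest on a common peg, it returns to that position while raising the total score by $2\gamma$, with $\gamma=\max\{w_{ij}+w_{ik}-2w_{jk}\}$. Appending $\lambda=\lfloor-\beta_i/(2\gamma)\rfloor+1$ copies to the minimal qualifying sequence yields a winning game of length $c_0+16\lambda=c_0+16+16\lfloor-\beta_i/(2\gamma)\rfloor$, which is exactly the recorded upper bound $2^n+15+\cdots$ for (EC1), $2^{n+1}+15+\cdots$ for (EC2), and $31+\cdots$ for (EC3) (the additive gap there reflects that, once the $7$-move loop fails, one must engage the larger disks through a $15$-move base whose score is the $\beta_2$ appearing in the statement, and only then loop). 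For (EC4,5) the final bound is a $\min$ because \one\ may pad either the ``return to start'' base (score $\beta_2$) or the ``transfer to another peg'' base (score $\beta_3$), and takes the cheaper.

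The small cases $n\le2$ I would settle by direct inspection, reading the winning sequences and their scores straight off Theorem~\ref{thm:3} and Lemma~\ref{lem:1}, so that the $3$/$5$/$7$-move thresholds there match the triangle inequalities. The genuinely delicate part, and the main obstacle, is the family of $n=3$ exceptions: when $\gamma$ is realized by one of the three special patterns $w_{13}+w_{23}-2w_{12}$, $w_{12}+w_{23}-2w_{13}$, or $w_{12}+w_{13}-2w_{23}$, the generic ``insert a disjoint loop'' construction is no longer optimal, because with only three disks the loop can be \emph{interleaved} with the transfer, saving moves. One must enumerate the few efficient interleavings by hand, compute each resulting score (these are the combinations $2(w_{12}+w_{23})-3w_{13}$, $2(w_{13}+w_{23})-3w_{12}$, $-w_{13}$, $-w_{12}$ appearing inside the floor functions, each assembled from the lemma scores), and take the minimum; checking that no shorter interleaving exists and that the floor-function bookkeeping lines up is where essentially all the care is concentrated.
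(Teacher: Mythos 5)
Your overall architecture matches the paper's: base sequence realizing the ending condition (from Theorem~\ref{thm:5}), score read off from Lemma~\ref{lem:1} and its companion, the $16$-move loop $s_3$ repeated $\lambda=\lfloor-\beta_i/(2\gamma)\rfloor+1$ times when the base score is non-positive, the homogeneous case split off first, and a minimum over final pegs for (EC4,5). That is essentially the paper's proof skeleton.

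However, your diagnosis of the $n=3$ exceptions is backwards, and this is a genuine gap rather than a presentational difference. The loop $s_3$ can only be inserted at an intermediate position $P$ in which the two smallest disks sit on Peg~$k$, where $\{i,j\}$ is the minimum-weight edge and $k\notin\{i,j\}$ (so that the increment is $2\gamma>0$), and this position must be reached by the base sequence after an \emph{even} number of moves so that it is \one's turn. Your write-up never states, let alone verifies, this reachability-and-parity condition; you simply assert the loop is ``inserted at an intermediate position where the two smallest disks rest on a common peg.'' That verification is exactly what fails in the exceptional cases: for $n=3$ with, say, $\gamma=w_{13}+w_{23}-2w_{12}$, the $7$-move minimal algorithm never passes through a position with the two smallest disks together on Peg~$3$ at an even move count, so the generic construction is not merely suboptimal but \emph{unavailable}. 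The fix in the paper is to replace the base by \emph{longer} $11$- and $13$-move sequences that do visit such a $P$ (whence the constants $27=11+16$ and $29=13+16$, which exceed the generic $23=2^3+15$), not to find shorter ``interleavings'' that ``save moves'' as you propose. Hunting for a cheaper interleaved sequence would send you in the wrong direction and would not produce the stated bounds; you would also be left with no argument that the generic construction is legitimate for $n\ge4$ and for the non-exceptional $n=3$ cases, where the claim that the minimal algorithm does reach a qualifying $P$ at even parity still needs to be checked.
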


\begin{proof} 
We first prove the results for $n=1$ and $2$. For (EC1) and $n=1$, the number of moves for a winning game is 1 if $w_{13} \neq 0$ because what Anh can do in the first turn is just to  move the disk from Peg~1 to Peg~3 (note that if $w_{13} < 0$, Bao wins). For $n=2$ and if $w_{12}+w_{23}>w_{13}$, the minimal algorithm for transferring the two disks from Peg~1 to Peg~3 is used for Anh to win. For $n=2$ and if $3w_{13}>w_{12}+w_{23}$, the move sequence $13-12-13-23-13$ with $\Delta(2)=3w_{13}-w_{12}-w_{23}$ is used. For (EC2,3), in the case of $n=1$, the game cannot be started because the disk is not allowed to be moved from Peg~1. For $n=2$ and if $w_{12}+w_{13}>w_{23}$, one of the sequences with seven moves used in the proof of Theorem \ref{thm:5} is adopted, resulting in $\Delta(2)=w_{12}+w_{13}-w_{23}$ (indeed these are minimal with respect to moves and leads to a win for \one ). For (EC4,5), the result for $n=1$ is obtained by combining the conditions for moving the disk to Peg~2 and to Peg~3. For $n=2$, the result is obtained by considering all the cases of Anh completing the tower on Peg~1, 2, and 3 and by employing the move sequences in (EC1-3).

For $n\geq 3$, if $w_{12}=w_{13}=w_{23}=\alpha\leq 0$, Anh can not get a positive score so she uses the strategy of escaping the game to end and achives a draw. If $\alpha >0$, Anh wins using the minimal algorithm with difference of score $\Delta(n)=\alpha >0$. Otherwise, that is, when $n\geq 3$ and not all $w_{12}$, $w_{13}$, and $w_{23}$ are equal, Anh uses the common strategy for all ending conditions (EC1-5) of using the minimal algorithm for transferring the tower from Peg 1 to a final peg if that algorithm can be regarded as the concatenation of the move sequences $s_1$ and $s_2^{-1}$ used in the proof of Theorem~\ref{thm:4}. For all (EC1-5) and $n\geq4$, and for most of the cases with $n=3$, this strategy works. Otherwise, that is, in some exceptional cases with $n=3$, we construct new move sequences. We state the detail for each of the ending conditions below.

For (EC1) and $n\geq3$, except for the two cases ($n=3$ and $\gamma = w_{13}+w_{23}-2w_{12}$) and ($n=3$ and $\gamma = w_{12}+w_{13}-2w_{23}$), the minimal algorithm for transferring the tower from Peg~1 to Peg~3 can be used as concatenation of $s_1$ and $s_2^{-1}$ because the minimal algorithm reaches the intermediate position satisfying the condition for $P$ in Theorem~\ref{thm:4} with an even number of moves. Then it is further divided into two subcases depending on whether $\Delta(n)$ is positive. When $\Delta(n) > 0$, the minimal algorithm results in Anh's win with $\M(n)=2^n-1$. Otherwise, the move sequence $s_3$ in the proof of Theorem~\ref{thm:4} is additionally used $\lambda$ times in the notation of Theorem~\ref{thm:4} for Anh to win, where by Lemma~\ref{lem:1},
$$
\lambda = \left\lfloor\frac{-\Delta(n)}{2\left(w_{ik}+w_{jk}-2w_{ij} \right)}\right\rfloor+1 =\left\lfloor\frac{-\beta_1}{2\gamma}\right\rfloor+1.
$$
Therefore, the minimal number of moves is bounded as
$$
\M(n)\leq 2^n-1+16\biggl(\left\lfloor\frac{-\beta_1}{2\gamma}\right\rfloor+1\biggr) =2^n+15+16\left\lfloor\frac{-\beta_1}{2\gamma}\right\rfloor.
$$
For $n=3$ and $\gamma = w_{13}+w_{23}-2w_{12}$, that is, if $w_{12}$ is the smallest weight, the minimal algorithm can not reach the position satisfying the condition for $P$, having the two smallest disks on Peg~3 and the largest disk on either of the remaining pegs. So the following two move sequences are considered instead as candidates for $s_1s_2^{-1}$. The sequence with a smaller number of moves is then actually used (we write the sequence in the format ($s_1$)-($s_2^{-1}$).
\begin{eqnarray*}
& &(12-13-23-12)-(23-13-12-23-12-13-23) \ \text{  (11 moves)}\\
& &(13-12-13-23-13-12)-(23-13-12-23-12-13-23) \ \text{  (13 moves)} 
\end{eqnarray*}
The total scores for these sequences are $\Delta(n)=2(w_{12}+w_{23})-3w_{13}$ and  $\Delta(n)=w_{13}$, respectively. Therefore, $\M(3)$ is bounded as stated. Next, for $n=3$ and $\gamma = w_{12}+w_{13}-2w_{23}$, similarly to the exceptional case just mentioned, the minimal algorithm can not be used as $s_1 s_2^{-1}$. So, the following two sequences are considered as candidates for $s_1$ and $s_2^{-1}$ and the one with a smaller number of moves is actually used.
\begin{eqnarray*}
& &(12-13-23-12-23-13-12-23)-(12-13-23) \ \text{  (11 moves)}\\
& &(12-13-23-12-13-23-13-12-13-23)-(12-13-23) \ \text{  (13 moves)} 
\end{eqnarray*}
For the first and the second sequences, the total scores are $\Delta(n)=2(w_{12}+w_{23})-3w_{13}$ and  $\Delta(n)=w_{13}$, respectively. Therefore, $\M(3)$ is bounded as stated. 

For (EC2) and $n\geq3$, the minimal algorithm for transferring the Tower from Peg~1 to itself, where the largest disk has to be moved, can be used as concatenation of $s_1$ and $s_2^{-1}$. More precisely, if $\gamma = w_{12}+w_{23}-2w_{13}$, the first minimal algorithm in Lemma~2 in which the role of Peg~2 and Peg~3 is exchanged is used. Then after the procedure ii) of the algorithm with an even number of moves, it reaches the intermediate position $P$, so this algorithm can be $s_1 s_2^{-1}$. Next, if $\gamma = w_{13}+w_{23}-2w_{12}$, the first minimal algorithm in Lemma~2 can be $s_1 s_2^{-1}$. Finally, if $\gamma = w_{12}+w_{13}-2w_{23}$, after the procedure iv) of the minimal algorithm with an even number of moves, it reaches the intermediate position $P$. So the minimal algorithm can be $s_1 s_2^{-1}$. Similarly to (EC1), it is further divided into two subcases depending on whether $\Delta(n)$ is positive. When $\Delta(n) > 0$, Anh wins with the minimal algorithm with $\M(n)=2^{n+1}-1$. Otherwise, by Lemma~2 the minimal number of moves is bounded as 
$$
\M(n)\leq 2^{n+1}-1+16\biggl(\left\lfloor\frac{-\beta_2}{2\gamma}\right\rfloor+1\biggr) = 2^{n+1}+15+16\left\lfloor\frac{-\beta_2}{2\gamma}\right\rfloor.
$$

For (EC3), recall that the Tower is moved to itself, but only the smallest disk has to be moved at least once. First, we examine the case when only the two smallest disks are to be moved during the game. Then Anh can win with the 7 moves in Table~1 if $\Delta(n)=w_{12}+w_{13}-w_{23}>0$. Next, when more than two disks are moved during the game, then Anh uses the minimal algorithm of transferring the smallest three disks from Peg~1 to itself with 15 moves. Then, as shown in (EC2) with $n=3$, if $\beta_2>0$, then $M(n)=2^{3+1}-1=15$ for all $n\geq 3$. Otherwise,
$$
\M(n)\leq 2^{3+1}-1+16\biggl(\left\lfloor\frac{-\beta_2}{2\gamma}\right\rfloor+1\biggr) = 31+16\left\lfloor\frac{-\beta_2}{2\gamma}\right\rfloor.
$$

For (EC4), in which one can win by completing the Tower at Peg~1, 2, or 3, the algorithms and results for (EC1) with Peg~2 or Peg~3 as the final peg and (EC2) are employed. First, recall that 
$$
\beta_3 = \left\{\begin{array}{ll}
\max\{w_{13}, w_{12}\} & \text{ if } n \text{ odd},\\
\max\{w_{12}+w_{23}-w_{13}, w_{13}+w_{23}-w_{12}\} & \text{ if } n \text{ even}. \\
\end{array}\right.
$$
When $\beta_3>0$, Anh can win by using the minimal algorithm to reach either of Peg~2 or Peg~3 with $2^n - 1$ moves. Otherwise, if pairs of $n$ and $\gamma$ are not the exceptional ones in (EC1) and (EC2), the minimal algorithms for reaching Peg~1, 2, or 3 with the repetitive part $s_3^{\lambda}$ is used. Then, the minimal number of moves is bounded as
$$
\M(n) \leq  \min\biggl\{2^{n}+15+16\left\lfloor\frac{-\beta_3}{2\gamma}\right\rfloor, 2^{n+1}+15+16\left\lfloor\frac{-\beta_2}{2\gamma}\right\rfloor\biggr\}
$$
as stated. The remaining cases are for (EC1) with Peg~2 or Peg~3 as the final peg with $n=3$ and with either of the following: (i) $\gamma = w_{13}+w_{23}-w_{12}$, (ii) $\gamma = w_{12}+w_{23}-w_{13}$, and (iii) $\gamma = w_{12}+w_{13}-w_{23}$. When $\gamma = w_{13}+w_{23}-w_{12}$, we evaluate the numbers of moves for each case of reaching Peg~3, Peg~2, and Peg~1. When Peg~3 is the final peg, this is exactly one of the exceptional cases in (EC1), so the number of moves should be
$$
\min\biggl\{27+16\left\lfloor\frac{2(w_{12}+w_{23})-3w_{13}}{2\gamma}\right\rfloor, 29+16\left\lfloor\frac{-w_{13}}{2\gamma}\right\rfloor\biggr\}.
$$
When Peg~2 is the final peg, this case of $\gamma = w_{13}+w_{23}-w_{12}$ is not at all exceptional; thus, the minimal algorithm for transferring the three disks from Peg~1 to Peg~2 is used as the sequence $s_1s_2^{-1}$. So, the number of moves is 
$$
2^3 - 1 +16\biggl(\left\lfloor\frac{-w_{12}}{2\gamma}\right\rfloor + 1\biggr)
= 23 +16\left\lfloor\frac{-w_{12}}{2\gamma}\right\rfloor.
$$
When Peg~1 is the final peg, there is no exception, so the minimal algorithm is used for $s_1s_2^{-1}$ and the number of moves is 
$$
2^{3+1} - 1 +16\biggl(\left\lfloor\frac{-\beta_2}{2\gamma}\right\rfloor + 1\biggr) = 31 +16\left\lfloor\frac{-\beta_2}{2\gamma}\right\rfloor.
$$
In all, the minimum of these numbers of moves is taken as the upper bound of $M_3(3)$ as stated for (i) of (EC4). For (ii), the argument is exactly the same with (i) by exchanging the role of Peg~2 and Peg~3. For (iii), since $\gamma = w_{12}+w_{13}-w_{23}$ has to be treated as an exceptional case for both Peg~2 and Peg~3 are final pegs, so the bound for $M_3(3)$ is obtained as stated.

Finally for (EC5), the algorithms and results used for (EC1) and (EC3) are employed for obtaining the bounds for $M_3(n)$. Since the argument is almost the same with (EC4), we omit the detail.
\end{proof}

%%%%%%%%%%%%%%%%%%%%%%%%%%%%%%%%%%%%%%%%%%%%%%%%%%%%%%%%%%%%%%%%%%%%%%%%%%%%%%%%
%%%%%%%%%%%%%%%%%%%%%%%%%%%%%%%%%%%%%%%%%%%%%%%%%%%%%%%%%%%%%%%%%%%%%%%%%%%%%%%%
\section{Discussion}\label{sec:disc}
%%%%%%%%%%%%%%%%%%%%%%%%%%%%%%%%%%%%%%%%%%%%%%%%%%%%%%%%%%%%%%%%%%%%%%%%%%%%%%%%
%%%%%%%%%%%%%%%%%%%%%%%%%%%%%%%%%%%%%%%%%%%%%%%%%%%%%%%%%%%%%%%%%%%%%%%%%%%%%%%%

We have studied a number of ending conditions for two-player variations of the classical Tower of Hanoi. In almost all these variations \two\ is at disadvantage, by his confinement to forced play, which gives the two-player game still the flavor of a one player game. Indeed each move by \two\ is automatic without thinking and his input to the game has diminished to a purely mechanical matter. 

Still \one 's optimal move paths in the scoring variation are non-trivial; in particular when it comes to minimizing the number of moves. Open problem: improve the bounds for the estimates of the minimal number of moves in Theorem~\ref{thm:7}.

What if we allow consecutive moves of one and the same disk?  In the two-player setting such variations requires some attention (in a one-player variation with real weights this might also require analysis). A least requirement is that a player may not return directly to the previous position.

\subsection{Two-player Tower of Hanoi with a ko-rule.}
Sometimes, in combinatorial games, to avoid trivial draws, a \emph{ko-rule} is added to the game (e.g. the game of Go): a player is not allowed to directly return to the previous game position. If some other move has been played between, the ko-rule is not invoked. So $G\rightarrow H\rightarrow G$ is not allowed, but $G\rightarrow H\rightarrow K\rightarrow G$ is.

However, if applied to a variation of two-player Tower of Hanoi where consecutive moves of the same disk is allowed, this ko-rule still gives a draw game from any non-trivial position, as we will next demonstrate. %This rule is introduced to make a game more interesting in that trivially draw positions can be avoided. 

Study a position of distance 2 to a terminal position. Suppose, without loss of generality, that the second smallest disk is on Peg 2 and the smallest is on Peg 1 and all remaining disks are on Peg 3 (which is terminal). For the game to terminate, with a winner, the other player had to move the second largest disk on top of the third largest disk. However, this move need never be carried out (by either player), since there will always be one peg available for the smallest disk. In our example, if the smallest disk was previously on Peg 2, then move it to Peg 3; if it were on Peg 3, then move it to Peg 2. The analogous draw strategy can be applied to new distance-2-to-terminal positions ad infinitum. (Note that this analysis is independent of normal play or scoring play convention.)

It turns out that, by combining an extended ko-rule with some other features, two-player variations can become interesting for both players, and sometimes they even put Bao at advantage.

%%%%%%%%%%%%%%%%%%%%%%%%%%%%%%%%%%%%%%%%%%%%%%%%%%%%%%%%%%%%%%%%%%%%%%%%%%%%%%%%
\subsection{When is Bao at advantage?}\label{S:4}
%%%%%%%%%%%%%%%%%%%%%%%%%%%%%%%%%%%%%%%%%%%%%%%%%%%%%%%%%%%%%%%%%%%%%%%%%%%%%%%%

The analysis so far has been disadvantageous for \two, apart from very special circumstances where \one\ was forced to revert to infinite play. Let us discuss minor alternations of the game rules, which gives \two\ a greater impact on the game. We settle with the non-scoring variation, and first without consecutive moves on the same disk, and say terminal tower on Peg 3. %setting and leave possible further scoring analysis for a future study.

A standard variation for impartial games is to play mis\`{e}re rather than normal play, that is, the player who terminates the game loses. Since consecutive moves on the same disk are not allowed, Anh moves always the smallest disk, so she will lose if she puts it on top of the rest of the disks. However, she will never do this, since the smallest disk, in this variation always has two move options. Hence, we impose a rule which avoids a trivial draw in mis\'ere play. A standard technique in combinatorial games is to use `compulsory' moves. Here: a player  must put a disk on a one size larger disk if such a move is available. This rule makes mis\`{e}re meaningful, because a final move is forced if the game goes this far. However there is no guarantee that the game will reach the second last position. In fact, since \one\ still controls the game by always moving the smaller disk she will have to avoid that the game continues until the second last position. Now, the question arises whether she can, at least force drawn. Of course now Anh only has a choice every second move, and the analysis is straight forward. For 2 disks, she can draw, by playing smallest disk to Peg 3, and then in her second move she plays it on Peg 2, which is a forced move. %But we leave this to the reader, since we are looking for some rules where Bao actually wins.

Another variation which gives \two\ more impact on the game, is to allow more than one move on a given disk in a direct sequence of moves. If the previous player moved disk $d$, then the current player can move it as well, unless the position would become exactly the same as another position in the current `round'. Here `round' stands for a circuit where only the given disk has been moved. Once another disk has been moved the current circuit is broken, and a new `round' starts. For the three peg case, this gives an immediate advantage for \two, but he will only be able to use it to play a draw game in the general case.

It turns out, that combining the two modifications in this section, but playing normal (EC5), \two\ can force a win. Let us explain in the three-peg case and with two disks. \one\ plays the smallest disk to Peg~3, say. \two\ moves it to Peg~2, which is legal in this round (note that he is not forced to put it on top of the larger disk on Peg~1, since this move is not legal). Now, \one\ has to play the larger disk to Peg~3 and hence \two\ wins by putting the smaller disk on top of the larger. Note that the forced moves were not important in the case for two disks. Now, we let them play the three-disk case, still on three pegs. The move sequence will, for example, be $\emph{12}-23-\emph{12}-23-\emph{13}-12-\emph{23}-12$ where \one's moves are all forced (except the first one), and \two\ wins. \two\ cannot immediately adapt this strategy for the cases $n> 3$, because if $n=4$, then \one\ would get a parity advantage. We leave it as an open question to resolve this game in general.

\subsection{Disjunctive sum play}
A complete theory of disjunctive sum of cyclic impartial games, on finite numbers of positions, has recently been developed \cite{FY, Sm}, generalizing the classical theory of Sprague and Grundy (a disjunctive sum of games consists of a finite number of  component games, and at each stage play is in one of the components). In a disjunctive sum of Tower of Hanoi games, \one\ would no longer control all moves, because the move sequence is not necessarily alternating in each game component. Therefore, in spite of the simple solutions of the single component normal play games, disjunctive sum theory should offer new insights. We also note that it is non-trivial to count the minimal number of moves between two arbitrary Tower of Hanoi positions on three pegs \cite{Hi, W}, so we guess that a computation of generalized Sprague-Grundy values of the game will be difficult. 

Disjunctive sum theory for scoring games has been developed in various settings e.g. \cite{E1, LNS, LNNS, M, St}, but none yet for cyclic scoring games; in particular it will generalize the class well tempered games \cite{WJ} (since there is always an odd number of moves). %We will only consider the solution of a single game in this paper.%In this paper, however, we build on another recent development of the original Tower of Hanoi.

%%%%%%%%%%%%%%%%%%%%%%%%%%%%%%%%%%%%%%%%%%%%%%%%%%%%%%%%%%%%%%%%%%%%%%%%%%%%%%%%
\subsection{How about several players?}
%%%%%%%%%%%%%%%%%%%%%%%%%%%%%%%%%%%%%%%%%%%%%%%%%%%%%%%%%%%%%%%%%%%%%%%%%%%%%%%%

Yet another variation is to play our standard variation with several players. In the case for (EC1), three players, three pegs and two disks, then the first player cannot win under optimal play, but she can decide which one of the other players that will win. If this game is played with three disks, then it is drawn. This can be seen this way. No player will be forced to put the second smallest disk on a final peg and thereby giving the player just after the winning move. This follows because the player who would have moved the supposedly second last move of the smallest disk, can choose to put it on either the second or third smaller disk, either of which prevents a bad forced move of the second smallest disk. Open problems: classify the variations of normal and scoring play where several player Tower of Hanoi terminates.\\

\noindent{\bf Acknowledgement.} We thank the anonymous referees for their comments and suggestions.
%%%%%%%%%%%%%%%%%%%%%%%%%%%%%%%%%%%%%%%%%%%%%%%%%%%%%%%%%%%%%%%%%%%%%%%%%%%%%%%%
%%%%%%%%%%%%%%%%%%%%%%%%%%%%%%%%%%%%%%%%%%%%%%%%%%%%%%%%%%%%%%%%%%%%%%%%%%%%%%%%

%%%%%%%%%%%%%%%%%%%%%%%%%%%%%%%%%%%%%%%%%%%%%%%%%%%%%%%%%%%%%%%%%%%%%%%%%%%%%%%%
%%%%%%%%%%%%%%%%%%%%%%%%%%%%%%%%%%%%%%%%%%%%%%%%%%%%%%%%%%%%%%%%%%%%%%%%%%%%%%%%
\end{document}